\documentclass[APA,STIX1COL]{WileyNJD-v2}
\usepackage{comment}
\usepackage{subcaption} 
\usepackage{float} 
 
\articletype{Article Type}%

\newcommand{\R}{\mathbb{R}}

\newcommand{\interv}[2]{\lbrace #1, \dots, #2\rbrace}
\newcommand{\cT}{\mathcal{T}}

\newcommand{\dd}{\mathrm{d}}

\usepackage{float}

\algrenewcommand\algorithmicensure{\textbf{Output:}}
\algrenewcommand\algorithmicrequire{\textbf{Input:}} 

\title{Stochastic generator of trajectories from record data: application to the fluctuations of a glacier's frontal position from a sample of moraines}

\author[1]{Megret Maud}

\author[3]{Pereira Mike}

\author[2]{Eckert Nicolas}

\author[1]{Naveau Philippe}

\author[4]{Jomelli Vincent}

\authormark{MEGRET \textsc{et al.}}

\address[1]{\orgdiv{LSCE}, \orgname{Laboratoire des Sciences du Climat et de l'Environnement},\orgaddress{\state{Saint-Aubin}, \country{France}}}

\address[2]{\orgdiv{Université Grenoble Alpes}, \orgname{INRAE, Institut National de Recherche pour l'Agriculture, l'Alimentation et l'Environnement}, \orgaddress{\state{Saint-Martin-d'Hères}, \country{France}}}

\address[3]{
\orgname{Centre STIM, Mines Paris PSL},
\orgaddress{
\state{Fontainebleau},
\country{France}
}}
\address[4]{\orgname{CEREGE, Centre de Recherche et d'Enseignement en Géosciences de l'Environnement},\orgaddress{\state{Aix-en-Provence}, \country{France}}}

\corres{Megret Maud,
LSCE, Laboratoire des Sciences du Climat et de l'Environnement, Saint-Aubin, France, \email{maud.megret@lsce.ipsl.fr}}

\abstract[Summary]{The record values theory study elements of a time series that exceed all previous observations, which are of particular interest in fields such as sports or climate science. In this paper, we propose a statistical method based on the construction of a Brownian stochastic simulator to reconstruct entire time series solely from such record values, even in a non-stationary case.  We then  implement a procedure, which can be compared to a Neural-Based Inference (NBI) procedure, to choose the optimal generator hyper parameters. To illustrate our method and motivate its development, we apply it to a glaciological problem. Understanding the past dynamics of glacier fronts is a major challenge to mitigate related mountain hazards, assess water resources, and evaluate contributions to sea-level rise. Field-visible indicators such as moraines provide spatio-temporal evidence of these  front position evolution (refered as trajectories) and can be interpreted as the records of a non-stationary process. As a benchmark case, the two hyper parameters of our NBI approach are tuned from the well documented French alpine \textit{Glacier des Bossons}. Our purely data-based approach offers new perspectives for challenging and further developing physical models of glacier dynamics and inferring the response of glaciers to climate change on centennial to millenial time scales. Beyond the glacier case, it has potential for the various problems for which record series is the sole available data.}

\keywords{Record theory, Stochastic simulations, Brownian Motion,  Neural Network, Glacier, Moraine, Climate Change}

\jnlcitation{
\cname{
\author{M. Megret},
\author{M. Pereira},
\author{N. Eckert},
\author{P. Naveau},
and \author{V. Jomelli}}
(\cyear{2026}),
\ctitle{Stochastic generator of trajectories from records data}}

\begin{document}
\maketitle

\section{Introduction}\label{intro}
In time series analysis, the theory of records focuses on modeling the probability distributions of observations that exceed   all previous values. 
It has been applied to  multiple fields, including  environmental studies \cite[see, e.g.,][]{majumdarExactlySolvableRecord2019,cebrian2022record,Fischer25},   and, in parallel,  mathematical developments that characterize record properties  have been steadily growing  over the last decades \cite[see, e.g.,][]{ahsanullahRecordsProbabilityTheory2015}. 
For instance, \cite{majumdarUniversalRecordStatistics2008a} investigated the frequency of records in the context of random walks \cite[see, also][]{wergenRecordsStochasticProcesses2013}, deriving the expectation and variance of total number  of records over a given number of  steps. This type of work was pursued by   \cite{godrecheRecordStatisticsStrongly2017} who calculated the shortest expected time between two consecutive records, followed by additional advances focusing on random walks \cite[see, e.g.,][]{mounaixStatisticsNumberRecords2020}. 
Moving away from random walks, \cite{ballerini_records_1987} proposed a  theoretical treatment of linear trends in records \cite[see also][]{Feuerverger96}, while \cite{Hoayek17} studied the non-stationary Yang-Nevzorov model and 
\citet{Gonzalez24} dealt with non-stationarity under the umbrella of extreme value theory \cite[see, e.g.,][]{beirlant_statistics_2004}. 

Despite these probability and statistical advances, major hurdles remain concerning the modeling of records in some applied contexts. In this article, our motivation stems from the following  practical problem. Given only records times and positions, our question is to determine how to simulate stochastic trajectories that are in compliance with such records. 
Here, being in compliance means that, not only a valid trajectory has to go through the given records (in time and in intensities), but this trajectory cannot produce new records that could erase the given ones. 
The case-study situation will be detailed in Section \ref{application} where we will see that glacier moraines can be interpreted  as records produced by glacier advances (see Figure \ref{fig:plot_mor}). Our  main applied question is to determine if the timing and positions of such records will allow us to  reconstruct (simulate) realistic past glacial trajectories on a long time scale over which the entire trajectory was not monitored systematically, a key knowledge in glaciology with various implications for, e.g. climate change assessment, water resources, ecology and mountain hazards \cite[see, e.g.][]{zemp2019global,cauvy2019global,jomelli2011irregular}. From a computational side, the  task of generating  stochastic trajectories under constraints   belongs to the so-called domain of conditional simulations \cite[see, e.g.,][]{Lantuejoul2002}. 
To our knowledge, current  conditional simulation techniques \cite[see, e.g.,][]{Wackernagel2003} are not tailored to respect the natural order present in records. However, there are methods that come close to this, for example, some research examines Gaussian processes with inequality constraints (\cite[see, e.g.][]{da2012gaussian,bachoc2022sequential}). These are sequential approaches, often based on rejection criteria, are, in our case, are rendered too cumbersome by the nested nature of the inequalities.  There are also machine learning-based generative time series models (see, e.g.,\cite{coletta2023constrained}) that work by learning inequality constraints from the training data; however, in our case, we assume that we do not necessarily have a sufficiently large dataset.

In this work, we propose an interpretable model based on the simple concept of a random walk, which can be fitted to a limited amount of data and also allows for the rapid generation of trajectories, given records and two hyperparameters. These hyper parameters may be chosen based on expert information, or using an inference scheme based on partial trajectory observations, which we illustrate  on the \textit{Glacier des Bossons} dataset (\citep{nussbaumer2012little}).\\
To summarize our plan,  Section \ref{model} focuses on describing our assumptions, our model  and a stochastic simulation scheme from a set of given records positions. 
A numerical application to the analysis of past glacial trajectories with respect to moraines and the choice of the hyper parameters are detailed in Section \ref{application}.

\section{Simulating  trajectories from given records}\label{model}
 
In a seminal paper, \cite{renyiTheorieElementsSaillants} formally defined the notion of records of a time series. Given a time series sequence of real-valued random variables, say $\{X_1,X_2, \dots\}$,  the value   $X_t$ is said to be a record   if 
\begin{equation}\label{def: records}
    X_t > X_j, \mbox{ for all positive integers  } j \in S_t,
\end{equation}
where  $S_t$ represents a given  reference set of indices. 
For example, the set $S_t= \{0, 1, \dots, t-1\}$ (with the usual convention $X_0=-\infty$) corresponds to the classical notion of records for which all past values have to be  smaller than the current one. 
In this specific context, \citeauthor{renyiTheorieElementsSaillants} showed that, whenever all variables $\{X_1,X_2, \dots\}$ are  assumed to be  independent and identically distributed (i.i.d.),  
$ \mathbb{P}(X_t>\max(X_0,...,X_{t-1}))= \frac{1}{t}$ for any positive integer $t$.
But the choice of $S_t$ can influence this property. Of particular interest in this work is the case where the set $S_t$ is taken as $S_t= \{0, 1, \dots, t-1, t+1\}$. This choice encodes a different viewpoint on the notion of record: a record $X_t$ is considered not only as the maximum value attained by the time series up to time $t$ ($X_t>\max\{X_0,...,X_{t-1} \}$), but also immediately following $t$ ($X_t> X_{t+1}$). Hence, such a record can be interpreted as marking the endpoint of an increasing segment of the time series that reaches a value higher than any previously observed value. This notion of records is directly related to the concept of moraines, the motivating example of this study  (see Section \ref{application}). Indeed, moraines are geological markers of the end of a glacier’s advance and thus correspond to records in the time series of glacier front positions. Figure \ref{fig:intro} compares the two records types, obtained with the classical set $S_t= \{0, 1, \dots, t-1\}$ (orange dots), and  with the set $S_t= \{0, 1, \dots, t-1, t+1\}$ (blue circles). As one can notice, in the classical definition, all times leading to record time, and during which the time series is non-decreasing, are also record times. However, in the latter case, this is now impossible as having records at two consecutive times is by construction, impossible, and records must now be local maxima of the time series. Moreover, as the set $ \{0, 1, \dots, t-1, t+1\}$ contains $ \{0, 1, \dots, t-1\}$, our newly defined records are necessarily records in the classical sense. However, under the i.i.d.  assumption, and considering $S_t=\{0, 1, \dots, t-1, t+1\}$, we now have
\begin{equation*}
   \mathbb{P}(X_t>\max\{X_0,...,X_{t-1},X_{t+1}\})= \frac{1}{t+1} \neq \frac{1}{t}
\end{equation*}  

\begin{center}
    \includegraphics[width=0.7\linewidth]{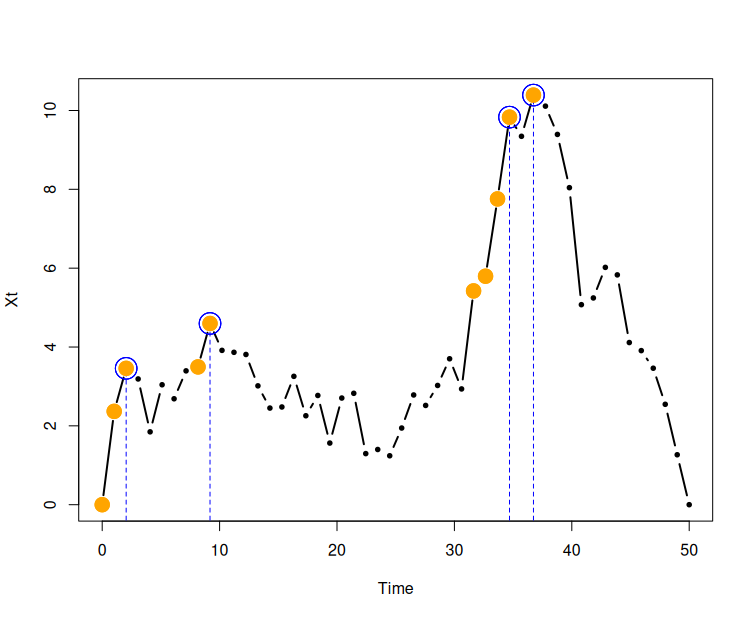}
    \captionof{figure}{Influence of the    reference set $S_t$ choice in record definition \eqref{def: records} from a simulated time series.  The orange dots corresponds to records defined with  
    $S_t=\{0, 1, \dots, t-1\}$, while the blue circles occur when $S_t=\{0, 1, \dots, t-1, t+1\}$. }
    \label{fig:intro}
\end{center}

Most  of the  literature on records is based on the set $S_t=\{0, 1, \dots, t-1\}$, but otherwise, it is  very sparse. This limits the applicability  of mathematical tools, in particular  for our record application (inferring glacial retreats from moraines, see Section \ref{application} for details).  Hence, one methodological contribution of the present work is to show that modeling and  simulating records  with $S_t = \{0, 1, \dots, t-1,t+1\}$ can be also implemented. \\

Let us introduce some notations and definitions. We call \textit{record set} a set of $N\ge 2$ pairs $\mathcal{M}=\{(R_n,L_n)\}_{n=1,\dots,N} \subset \R\times \R_+$ such that, for any $1\le n<N$, $L_{n+1}>L_n+1$ and $R_{n+1}>R_n$. Starting now from a time series $\{X_t\}_{t\in\interv{1}{\mathcal{T}}}$, $\mathcal{T}>1$, we write $\mathcal{R}(\{X_t\}_{t\in \interv{1}{\mathcal{T}}})$, the set of its records, marked as pairs (Value of the record, Occurence time of the record). In particular,  $\text{Card}~\mathcal{R}(\{X_t\}_{t\in \interv{1}{\mathcal{T}}})\ge 2$ and $ \mathcal{R}(\{X_t\}_{t\in \interv{1}{\mathcal{T}}})$ is by construction a record set as defined above. Our aim is to do the converse: given a record set $\mathcal{M}=\{(R_n,L_n)\}_{n=1,\dots,N} $, we aim at simulating a stochastic trajectory $\{X_t\}_{t\in\interv{1}{\mathcal{T}}}$ such that $\mathcal{M} \subset \mathcal{R}(\{X_t\}_{t\in \interv{1}{\mathcal{T}}})$. That means that all records defined in $\mathcal{M}$ are records of $\{X_t\}_{t\in\interv{1}{\mathcal{T}}}$ but this time series may admit additional records. Moreover, we will make the following assumption: $L_1=1$ and $L_N=\mathcal{T}$, meaning that the time series we model starts at time $t=L_1=1$ with the first record in $\mathcal{M}$ (i.e. $X_1=R_1$), and ends at time $t=L_N=\mathcal{T}$ with the last record in $\mathcal{M}$ (i.e. $X_\mathcal{T}=R_N$). We postpone to the end of Section~\ref{step2} the discussion about the case where  $L_1>1$ and/ot $L_N<\mathcal{T}$.

\subsection{Case 1: $\mathcal{M}$ contains two records}

Given $\mathcal{T}>3$ and a record set $\mathcal{M}=\lbrace (R_1,1), (R_2,\mathcal{T})\rbrace$ we aim at modeling times series $(X_t)_{t\in\interv{1}{\mathcal{T}}}$ with endpoints $(R_1,1)$ and $(R_{2},\mathcal{T})$, and such that these pairs are records of the time series. To do so, we rely on a well-known, fast  and   simple building block  to simulate  a random trajectory from a given starting point to a given end point: the Brownian motion (or Wiener process). For $T>0$, it is a continuous-time stochastic process, say $(B_t)_{t\in [0,T]}$, with the following  properties. It is  (almost-surely) continuous in time,   $B_0=0$, and it has 
zero-mean Gaussian independent increments, i.e.,  $B_t-B_s\sim\mathcal{N}(0,\sigma^2(t-s))$ for any  $0\le s<t$, and $B_t - B_s$ is independent of $B_{t'} -B_{s'}$ whenever $0\le s<t\le s'<t'$ \cite[see, e.g.,][Definition 1.1]{morters2010brownian}. The standard deviation $\sigma$ is set to one in the classical definition, but a  scaling parameter brings an added flexibility needed in our application. When discretized in time (with constant time steps $1$), the Brownian motion $(B_t)_{t\in \lbrace 0,1,\dots}$ can be seen as a random walk with independent Gaussian increments $\mathcal{N}(0,\sigma^2)$. A stochastic process related to the Brownian motion, and closer to the notion of records, is the Brownian bridge, which corresponds to a Brownian motion conditioned to be to take the same value as its initial value, i.e. 0, at some fixed future time $T>0$. Indeed, starting from a Brownian motion $(B_t)_{t\in[0,T]}$, a Brownian bridge $\tilde{B}_t$ on $[0,T]$ can be constructed through the transformation
\begin{equation}\label{eq:bridge_std1}
    \tilde{B}_t = B_t - \frac{t}{T}B_T, \quad t\in [0,T].
\end{equation}
Further, a Brownian bridge $(\tilde{B}_t^{x,y})_{t\in[0,T]}$ with arbitrary endpoints $\tilde{B}_0^{x,y}=x$ and $\tilde{B}_T^{x,y}=y$ can be constructed via a simple affine transform of \eqref{eq:bridge_std1}, thus giving
\begin{equation}\label{eq:bridge_std2}
    \tilde{B}_t^{x,y} = x+\frac{t}{T}(y-x)+\tilde{B}_t , \quad t\in [0,T].
\end{equation}
Such a construction then naturally yields a (continuous-time) stochastic process in $[0,T]$ that is conditioned to start and end  at predefined values \citep[Chapter I.3]{revuz2013continuous}. This is the starting point of our stochastic simulator. Indeed, let us circle back to our simulation problem. First, we need to generate a time series $(X_t)_{t\in\interv{1}{\mathcal{T}}}$ that which starts $X_1=R_1$, ends with $X_T=R_T$ and admits the pairs $(R_1,1)$ and $(R_2,\mathcal{T})$ as records. To fulfill the first two requirements, we propose to simulate a discrete Brownian bridge $(\tilde{B}_t^{R_1,R_2})_{t\in[0,T]}$, conditioned to start at $R_1$ and end at $R_2$, on a time range $T=\mathcal{T}-1$. This can be done by simply generating a discrete Brownian motion (with time step $1$), and applying the transformations \eqref{eq:bridge_std1} and \eqref{eq:bridge_std2}.  A trajectory $(Y_t)_{t\in \{1,...,\mathcal{T}\}}$ that starts at the record  $(R_1,1)$ and ends at the second one $(R_{2},\mathcal{T})$ is then obtained by simply setting
\begin{equation}\label{eq:bridge_mod}
    Y_t = \tilde{B}_{t-1}^{R_1,R_{2}}, \quad t\in \{1,...,\mathcal{T}\} .
\end{equation} 

There is no guarantee however that the resulting trajectory actually admits $(R_1,1)$ and $(R_2,\mathcal{T})$ as records, which would mean in particular that $Y_{1}$ is a local maximum and $Y_{\mathcal{T}}$ a global maximum of $(Y_t)_{t\in \interv{1}{\mathcal{T}}}$. To enforce this,  we apply a Vervaat-type transform (\cite{lupuVervaatTransformBrownian2015})  to the trajectory . Such transforms are classically used to construct, from a Brownian bridge, a so-called Brownian excursion, i.e. a Brownian bridge conditionned to take positive values. In our setting, we consider the  transform $V$ that maps a time series $\lbrace F_t\rbrace_{t\in\interv{1}{\mathcal{T}}}$ to a time series $\lbrace V(F)_t\rbrace_{t\in\interv{1}{\mathcal{T}}}$ given by
\begin{align}\label{eq:vervaat}
t &\mapsto V(F)_t=\left\{ \begin{array}{ll}F_{1}-F_{T_{max}}+F_{T_{max}-1+t}, &\mbox{if }1 \leq t \leq 1+\mathcal{T}-T_{max},\\
    F_{\mathcal{T}}-F_{T_{max}}+F_{T_{max}-\mathcal{T}+t}, &\mbox{if }  1+\mathcal{T}-T_{max}<t\leq \mathcal{T},
    \end{array}
    \right.
\end{align}
where 
\begin{equation}\label{eq:Tmax}
    T_{max}= \min\left\lbrace\mathop{\text{argmax}}_{t\in \interv{1}{\mathcal{T}}}{F_t}\right\rbrace,
\end{equation}
i.e. $T_{max}$ corresponds to the smallest time in $\interv{1}{\mathcal{T}}$ realizing the maximum of the time series $\lbrace F_t\rbrace_{t\in\interv{1}{\mathcal{T}}}$ . 
In particular, if $T_{max}=\mathcal{T}$, then $V(F)_t=F_t$ for any $t\in\interv{1}{\mathcal{T}}$ and the endpoint $F_{\mathcal{T}}$ is necessarily a record of $\lbrace F_t\rbrace_{t\in\interv{1}{\mathcal{T}}}$. Besides,
one can show (cf. Appendix~\ref{appen:vervaat} for details) that, assuming that $F_1<F_{\mathcal{T}}$ and that $T_{max}<\mathcal{T}$, the process $(V(F)_t)_{t\in \interv{1}{\mathcal{T}}}$ has the same endpoints as $F$ (i.e  $V(F)_{1}=F_{1}$ and $V(F)_{\mathcal{T}}=F_{\mathcal{T}}$). Besides, these endpoints are local maxima of the time series, and satisfy in particular $V(F)_t\leq V(F)_{\mathcal{T}}$ for $t\in\interv{1}{\mathcal{T}}$. It is then possible to see that, when applying the transform~\eqref{eq:vervaat} to the modified bridge $(Y_t)_{t\in\interv{1}{\mathcal{T}}}$ obtained in \eqref{eq:bridge_mod}, we  obtain a trajectory
\begin{equation}\label{eq:traj_rec}
    Z_t = V(Y)_t, \quad t\in\interv{1}{\mathcal{T}}
\end{equation}
which by construction admits $(R_1,1)$ and $(R_2,\mathcal{T})$ as records (and as endpoints) whenever the corresponding time $T_{max}$ satisfies $T_{max}<\mathcal{T}$. A simple rejection approach based on the value of $T_{max}$ can ensure that this last condition is satisfied, thus resulting in the following simulation approach yielding a time series conditioned by a pair of records which are also its endpoints:
\begin{enumerate}
    \item Generate a bridge $(Y_t)_{t\in\interv{1}{\mathcal{T}}}$ as defined in~\eqref{eq:bridge_mod} between the two records
    \item Compute the associated $T_{max}$ in~\eqref{eq:Tmax}
    \item \textbf{If} $T_{max}=\mathcal{T}$ And $Y_2<Y_1$ (i.e. $Y_1$ is a record): Return the time series $(Y_t)_{t\in\interv{1}{\mathcal{T}}}$\\
    \textbf{Else If} $T_{max}<\mathcal{T}$ : Return the time series $(Z_t)_{t\in\interv{1}{\mathcal{T}}}$  defined as the Vervaat transform~\eqref{eq:vervaat} of $(Y_t)_{t\in\interv{1}{\mathcal{T}}}$\\
    \textbf{Else}: Go to 1.
\end{enumerate}

\begin{remark}
    The acceptance rate of the rejection step above can be approximately evaluated using the properties of (time-continuous) Brownian bridges. This is discussed in Appendix~\ref{appen:rejection}, where we show in particular that by either working with a large enough time series size $\cT$, and/or with a large enough variance $\sigma^2$ for the Brownian bridge, the acceptance rate can be controlled.
\end{remark}

Finally, to allow different degrees of smoothness on trajectories, we apply  a  convolution using a Gaussian kernel of  parameter $s$ to $\lbrace Z_t\rbrace_{t\in\interv{1}{\mathcal{T}}}$.  
 For $\rho>0$ and $k\in\R$, take 
 \begin{equation*}
     w^{(s)}(k)=\exp\left(-\frac{k^2}{2s^2}\right)
 \end{equation*}
We then define the associated smoothed time series $\lbrace\widetilde{Z}_t\rbrace_{t\in\interv{1} {\mathcal{T}}}$ by
 \begin{equation*}
     \tilde{Z}_t = \frac{1}{c_t}\sum_{j=1}^{\mathcal{T}}w^{(s)}(j-t)Z_{j}, \quad\text{where}\quad c_t = \sum_{j=1}^{\mathcal{T}}w^{(s)}(j-t).
 \end{equation*}
The time series $\lbrace\widetilde{Z}_t\rbrace_{t\in\interv{1} {\mathcal{T}}}$ corresponds to a Gaussian smoothing of the time series $\lbrace {Z}_t\rbrace_{t\in\interv{1} {\mathcal{T}}}$. But as such, the resulting time series $\lbrace\widetilde{Z}_t\rbrace_{t\in\interv{1} {\mathcal{T}}}$ does not have the same endpoints as the original time series $\lbrace{Z}_t\rbrace_{t\in\interv{1} {\mathcal{T}}}$. To smooth the behavior of the smoothed time series near the extremities and ensure that its endpoints are same as the original time series, we propose a final adjustment which consists in assuming a Gaussian-like decay of the time series near its endpoints. We end up with a final smoothing $\lbrace X_t\rbrace_{t\in\interv{1}{\mathcal{T}}}$ of the time series $\lbrace Z_t\rbrace_{t\in\interv{1}{\mathcal{T}}}$  defined as
\begin{equation}\label{eq:w_s}
   X_t
   =w^{(s)}(t-1)\frac{1-w^{(s)}(\mathcal{T}-t)}{1-w^{(s)}(\mathcal{T}-1)}Z_1 + \left(1-w^{(s)}(\mathcal{T}-t)\right)\left(1-w^{(s)}(t-1)\right)\widehat{Z}_t+w^{(s)}(\mathcal{T}-t)\frac{1-w^{(s)}(t-1)}{1-w^{(s)}(\mathcal{T}-1)}Z_{\mathcal{T}} 
\end{equation}
where we take for $t\in\interv{1}{\mathcal{T}}$,
\begin{equation}\label{eq:def_hatz}
    \widehat{Z}_t=\left\lbrace\begin{aligned}
        &\min\left\lbrace \widetilde{Z}_2 \,,\frac{\alpha_{\mathcal{T}}}{1-w^{(s)}(\mathcal{T}-1)}Z_1\right\rbrace &&\text{if } t=2 \\
        &\widetilde{Z}_t &&\text{otherwise}
    \end{aligned}\right., \quad \text{where } \alpha_{\mathcal{T}}=(1-w^{(s)}(\mathcal{T}-3))+w^{(s)}(\mathcal{T}-3)\frac{w^{(s)}(\mathcal{T}-1)}{w^{(s)}(\mathcal{T}-2)}.
\end{equation}
This means in particular that we have $X_t \approx w^{(s)}(t-1)Z_1 $ when $\vert t-1\vert/s\rightarrow 0$, $X_t \approx w^{(s)}(\mathcal{T}-t)Z_\mathcal{T} $ when $\vert \mathcal{T}-1\vert/s\rightarrow 0$, and $X_t \approx \tilde{Z}_t$ ($=$ Gaussian smoothing of  $\lbrace Z_t\rbrace_{t\in\interv{1}{\mathcal{T}}}$ with bandwidth $s$)  when $\vert \mathcal{T}-t\vert \gg s$ and $\vert t-1\vert \gg s$. It is then possible to check that, under some mild restrictions on the record values which are detailed and explained in Appendix~\ref{appen:smooth}, the endpoints of the time series $\lbrace X_t\rbrace_{t\in\interv{1}{\mathcal{T}}}$ are indeed $(R_1,1)$ and $(R_2,\mathcal{T})$, and are records of the time series. \\

To conclude this subsection, we summarize in Algorithm~\ref{alg1} the approach used to simulate a random trajectory $(X_t)_{t\in\interv{1}{\mathcal{T}}}$ conditioned by a record set $\mathcal{M}=\lbrace (R_1,1),(R_2,\mathcal{T})\rbrace$ defining its endpoints.

\begin{algorithm}
\caption{Simulation algorithm conditioned to $2$ records}\label{alg1}
\begin{algorithmic}
\Require  Time horizon $\mathcal{T}>3$, Record set $\mathcal{M}=\lbrace (R_1,1),(R_2,\mathcal{T})\rbrace$, Standard-deviation of Gaussian increments $\sigma>0$, Smoothing parameter $s\ge 0$ all chosen so that~\eqref{eq:cond_rec} is satisfied
\Ensure Simulated trajectory $(X_t)_{t\in\interv{1}{\mathcal{T}}}$ such that $\mathcal{M}\subset \mathcal{R}((X_t)_{t\in\interv{1}{\mathcal{T}}})$
    \State $\mathtt{restart}\leftarrow\mathtt{TRUE}$
    \While{$\mathtt{restart}$}
  \State Generate a discretised Brownian motion $(B_t)_{t\in\interv{0}{\mathcal{T}-1}}$  with variance of increments $\sigma^2$
  \State Transform  $( B_t)_{t\in\interv{0}{\mathcal{T}-1}}$ into a Brownian Bridge $(\tilde B_t)_{t\in\interv{0}{\mathcal{T}-1}}$using~\eqref{eq:bridge_std1}
  \State Transform $(\tilde B_t)_{t\in\interv{0}{\mathcal{T}-1}}$ into a modified bridge $\tilde{B}_t^{R_1,R_{2}}$ starting at $R_1$ and ending at $R_2$ using~\eqref{eq:bridge_std2}
  \State Compute the time series $(Y_t)_{t\in\interv{1}{\mathcal{T}}}$ defined as $Y_t\leftarrow \tilde{B}_{t-1}^{R_1,R_{2}}$
  \State Recompute $T_{max}$ in~\eqref{eq:Tmax} on the time series 
  $(Y_t)_{t\in\interv{1}{\mathcal{T}}}$ 
   \If{$T_{max}=\cT$ \textbf{And} $Y_2<Y_1$} 
   \State Set $Z_t=Y_t$ for $t\in\interv{1}{\cT}$
 \State $\mathtt{restart}\leftarrow 
 \mathtt{FALSE}$
 \ElsIf{$T_{max}<\mathcal{T}$}
  \State Compute $\lbrace Z_t\rbrace_{t\in\interv{1}{\mathcal{T}}}$  defined as the Vervaat transform~\eqref{eq:vervaat} of $(Y_t)_{t\in\interv{1}{\mathcal{T}}}$ 
  \State $\mathtt{restart}\leftarrow 
 \mathtt{FALSE}$
 \EndIf
  \EndWhile
 \State Compute $(X_t)_{t\in\interv{1}{\mathcal{T}}}$ defined as the Smoothing~\eqref{eq:w_s} with parameter $s$ of $(Z_t)_{t\in\interv{1}{\mathcal{T}}}$ 
  \State\Return $(X_t)_{t\in\interv{1}{\mathcal{T}}}$ 
\end{algorithmic}
\end{algorithm}

\subsection{Case 2: $\mathcal{M}$ contains more than two records}\label{step2}

Given now a record set $\mathcal{M}=\lbrace (R_n,L_n)\rbrace_{n\in\interv{1}{N}}$ (with $L_1=1$ and $L_n=\mathcal{T}$), we aim at modeling times series $(X_t)_{t\in\interv{1}{\mathcal{T}}}$ with endpoints $(R_1,1)$ and $(R_{N},\mathcal{T})$, and such that the pairs in $\mathcal{M}$ are records of the time series. At this stage, we know how to simulate trajectories between each pair of consecutive records, $(R_n,L_n),(R_{n+1},L_{n+1})$ for $n \in \{1,...,N-1\}$, using Algorithm~\ref{alg1} with $\mathcal{T}=L_{n+1}-L_n+1$ and $\mathcal{M}=\lbrace (R_n,1),(R_{n+1},L_{n+1}-L_n+1)\rbrace$. Hence, to obtain a complete trajectory  $\{X_t\}_{t \in\interv{1}{\mathcal{T}}}$ we propose to simply assemble the smoothed transformed trajectories~\eqref{eq:w_s} built between each pair of records $(R_n,L_n),(R_{n+1},L_{n+1})$, thus giving
\begin{equation}
    X_t = S_{t-L_n+1}^n, \quad \text{for } t\in \interv{L_n}{L_{n+1}},
\end{equation}
where $1\le n<N$ and by construction (cf. previous subsection) $S_{L_{n+1}-L_n+1}^n=S_1^n$. The overall approach to generate time series conditioned to a set of records $\mathcal{M}$ is summarized in Algorithm \ref{alg2}. As defined, the time series $(X_t)_{t\in\interv{1}{\mathcal{T}}}$ satisfies $\mathcal{M}\subset\mathcal{R}(\{X_t\}_{t\in\interv{1}{\mathcal{T}}})$, as shown in details in the Appendix~\ref{appen:res_alg2}, and thus answers the simulation question of this subsection. This approach is once again justified by analogy the properties of Brownian motions, and in particular their Strong Markov property. Indeed, considering our record occurrence times in $\mathcal{M}$ as stopping times, simulations of the process after said stopping time (but conditioned to start at the same place) are independent of the behavior prior to the stopping time \cite[Theorem 2.16]{morters2010brownian}.\\

Note that two parameters,   $\sigma$ and $s$, are needed to implement the proposed algorithm.  
The first one scales the Brownian process $B_t$, while the second one captures the smoothness of the kernel as explained in \eqref{eq:w_s}. We can now move on to explain the role of $\sigma$ and $s$.

\begin{algorithm}
\caption{Simulation algorithm conditioned to $N\ge 2$ records}\label{alg2}
\begin{algorithmic}
\Require  Time horizon $\mathcal{T}>3$, Record set $\mathcal{M}=\lbrace (R_n,L_n)\rbrace_{n=1,\dots,N}$ (with $L_1=1$, $L_N=\mathcal{T}$), Standard-deviation of Gaussian increments $\sigma>0$, Smoothing parameter $s\ge 0$
\Ensure Simulated trajectory $(X_t)_{t\in\interv{1}{\mathcal{T}}}$ such that $\mathcal{M}\subset \mathcal{R}((X_t)_{t\in\interv{1}{\mathcal{T}}})$
\State $N \leftarrow length(\mathcal{M})$ 
  \For {each $n$ in $\{1,...,N-1\}$}
  \State $l_n \leftarrow L_{n+1}-L_n+1$ 
  \State $\lbrace S^n_t\rbrace_{t\in\interv{1}{l_n}} \leftarrow$ Output of Algorithm~\ref{alg1} with: $\mathcal{T}=l_n,\mathcal{M}=\lbrace (R_n,1),(R_{n+1},l_n))\rbrace,\sigma=\sigma,s=s$
  \EndFor
  \State\Return $(X_t)_{t\in\interv{1}{\mathcal{T}}}$ defined as $X_t = S_{t-L_n+1}^n, \quad \text{for } t\in [L_n,L_{n+1}]$
  
\end{algorithmic}
\end{algorithm}

We end this subsection by noting that the simulation approach can be extended to cases where $L_1<1$ and/or $L_N<\mathcal{T}$. In the latter case, new records may be created at times greater than $\mathcal{T}$. Furthermore, the evolution after time $L_N$ does not alter the presence or absence of records prior to time $L_N$. It is therefore sufficient to generate a discretised Brownian motion from $L_N$ up to $\mathcal{T}$, conditioned on having a decreasing initial time step to preserve the fact that the point $(R_N,L_N)$ is a record (which amounts to considering an initial time step drawn from a truncated Gaussian). In the first case, the value of $R_1$ represents a threshold, and all values for times less than $L_1$ must be under it. By analogy with the construction we have proposed, we can generate such a series using the concept of a meander, a Brownian process conditioned to be less than a value, here $R_1$ (see, e.g., \cite{iglehart1974functional}).    

\subsection{Illustrations of the effects of the hyper parameters on simulated trajectories}
In this subsection, we want to give an overview of the simulations that can be made with the first step of Algorithm \ref{alg1} and of the effects that the hyper parameters $s$ and $\sigma$ have on them.
We have two of the records accepted by the $X_t$ series shown in Figure \ref{fig:intro}. They have the following coordinates, $(R_2=4.6,L_2=9),(R_3=9.8,L_3=34)$. We denote by $l$ the number of timesteps between those two records. Here, $l=L_3-L_2+1=26$. We want to simulate and analyze Z trajectories that reproduce the portion of the X trajectory between the two first records i.e. between the two blue dotted lines. 
\begin{figure}[ht]
    \centering
    \begin{subfigure}[t]{0.49\textwidth}
        \centering
        \includegraphics[width=\linewidth]{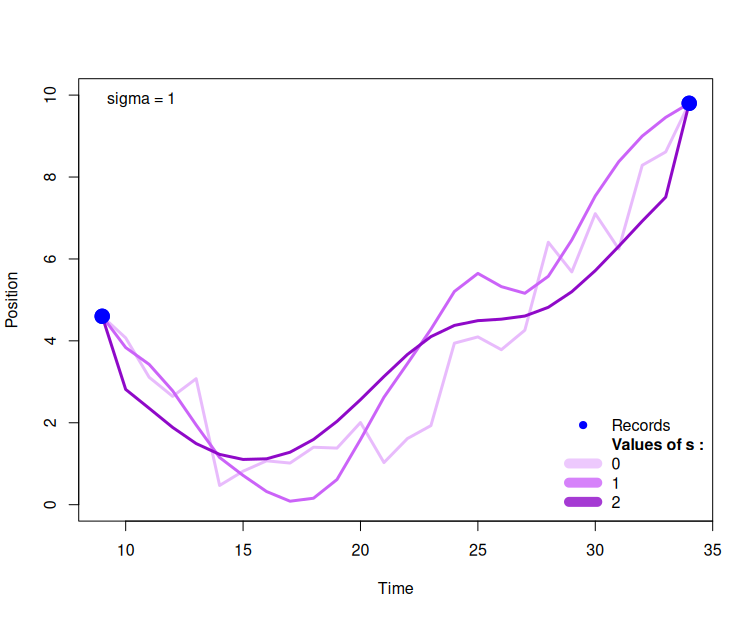}
        \caption{Influence of $s$ with $\sigma = 1$}
        \label{fig:ex_s}
    \end{subfigure}
    \hfill
    \begin{subfigure}[t]{0.49\textwidth}
        \centering
        \includegraphics[width=\linewidth]{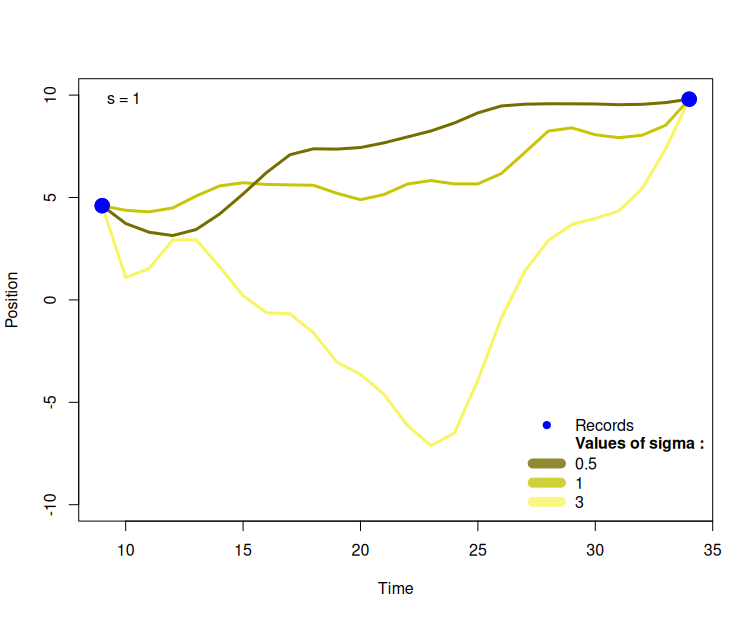}
        \caption{  Influence of $\sigma$ with $s = 1$}
        \label{fig:ex_sigma}
    \end{subfigure}

    \caption{Examples of trajectories between $(R_2,L_2)$ and $(R_3,L_3)$ for different values of $s$ and of $\sigma$.  Subfigure a) : impact of $s$ on trajectories with $\sigma$ fixed. Subfigure b) : impact of $\sigma$ on trajectories with $s$ fixed. Each color corresponds to a value of the studied hyper parameter and for each of those values, one example of a simulated trajectory has been plotted.}
    \label{fig:ex}
\end{figure}
\begin{figure}[ht]
    \centering
    \begin{subfigure}[t]{0.49\textwidth}
        \centering
        \includegraphics[width=\linewidth]{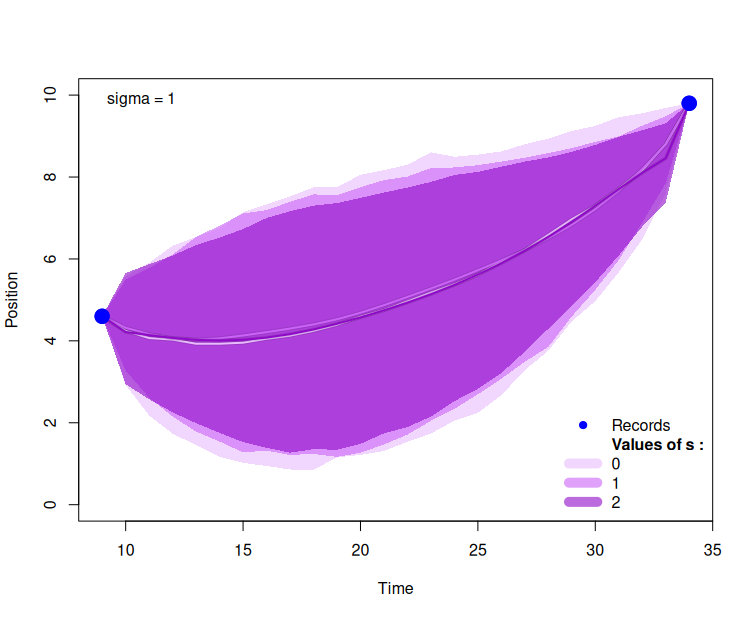}
        \caption{Influence of $s$ with $\sigma = 1$}
        \label{fig:s_gen}
    \end{subfigure}
    \hfill
    \begin{subfigure}[t]{0.49\textwidth}
        \centering
        \includegraphics[width=\linewidth]{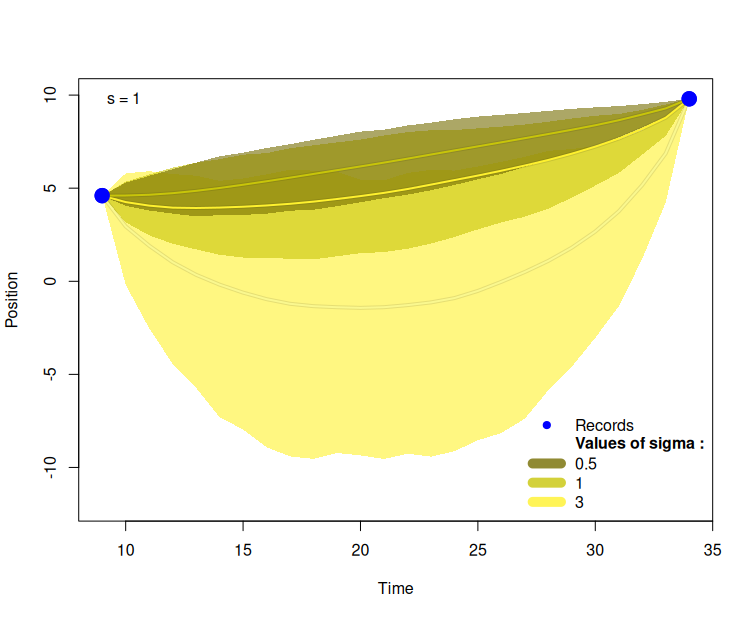}
        \caption{  Influence of $\sigma$ with $s = 1$}
        \label{fig:sigma_gen}
    \end{subfigure}

    \caption{Envelopes and averages of the trajectories between $(R_2,L_2)$ and $(R_3,L_3)$ for different values of $s$ and of $\sigma$. Subfigure a) : impact of $s$ on trajectories with $\sigma$ fixed. Subfigure b) : impact of $\sigma$ on trajectories with $s$ fixed. Areas of different colors are the envelopes (see equations \eqref{eq:quantile} and \eqref{eq:fct_dist}) of the trajectories simulated for different values of the studied parameter. }
    \label{fig:faisceaux_gen}
\end{figure}
\\

Figures \ref{fig:ex} and \ref{fig:faisceaux_gen} illustrate the impact of the hyper parameters $s$ and $\sigma$ on the trajectories. The effects depend in part on the range of the studied records and on the specified time step. We have chosen to retain the corresponding time step ($l=26$) and to ensure that the hyper parameters tested produce trajectories that are relatively close to the one we are attempting to reproduce (i.e. the one in Figure \ref{fig:intro}). Figure \ref{fig:ex} shows simulated trajectories between those records. For the left panel, the value of the parameter $\sigma$ was set to $1$ and for each value of $s$ in $\{0,1,2\}$, a trajectory was plotted. The higher the value of s, the more the trajectory will be smoothed. Once the different portions of the curve have been assembled, this parameter has also the effect of giving a more or less smooth appearance to the trajectories around the records. For the right panel, we did the same for $\sigma$ in $\{0.5,1,3\}$, fixing $s=1$. In general, the higher the value of the parameter $\sigma$, the greater the amplitude of the trajectories. Thus, for a given set of hyper parameters, the simulated trajectories vary to a greater or lesser extent, and giving just one example of trajectory as in Figure \ref{fig:ex} is not enough to have an overall view of the space simulations can cover. Therefore, Figure \ref{fig:faisceaux_gen} remedies this, it was produced by calculating the mean and the envelopes of $K=1000$ simulated trajectories, $X^1,...X^{K}$ , between these two records for each value of $s$ and $\sigma$. The envelopes correspond to the colored areas that represent the interval between the 5\% quantile series $Q^5$ and the 95\% quantile series $Q^{95}$. The quantile series ${Q^5_1,...,Q^5_l}$ and ${Q^{95}_1,...,Q^{95}_l}$ have been defined by computing for each timestep $t \in \{1,...,l\}$,
\begin{equation}\label{eq:quantile}
    Q^{\alpha}_t=\inf \{q,F_t(q)\geq \alpha \},
\end{equation} where $F_t(q)$ is the empirical distribution function for timestep $t$ defined by, 
\begin{equation}\label{eq:fct_dist}
    F_t(q)=\frac{1}{K}\sum_{k=1}^{K}\mathbb{1}(X^k_t\leq q)
\end{equation} and $\alpha \in \{0.05,0.95\}$. The lines are the mean $M_1,...,M_l$ of the trajectories and have been defined as,
\begin{equation}
    M_t=\frac{1}{K}\sum_{k=1}^{K}X_t^k.
\end{equation}
It shows that the higher the value of $s$, the smaller the range covered by the trajectories but the effects are barely noticeable, as this hyper parameter mainly affects the smoothness of the trajectories. Conversely, the higher the value of $\sigma$, the larger this range. Moreover, $s$ has an influence in the neighborhood of the records. The higher its value, the smoother the simulated trajectories are near the records. The larger $s$ is, the flatter the trajectories will be at record levels and the less pronounced the difference between two records will be, we can say that this parameter affects the horizontal amplitude of the trajectory beams. Conversely, $\sigma$ affects the vertical amplitude. The higher its value, the more the trajectories are allowed to descend between the two records.

\begin{figure}
\centering
    \includegraphics[width=0.6\linewidth]{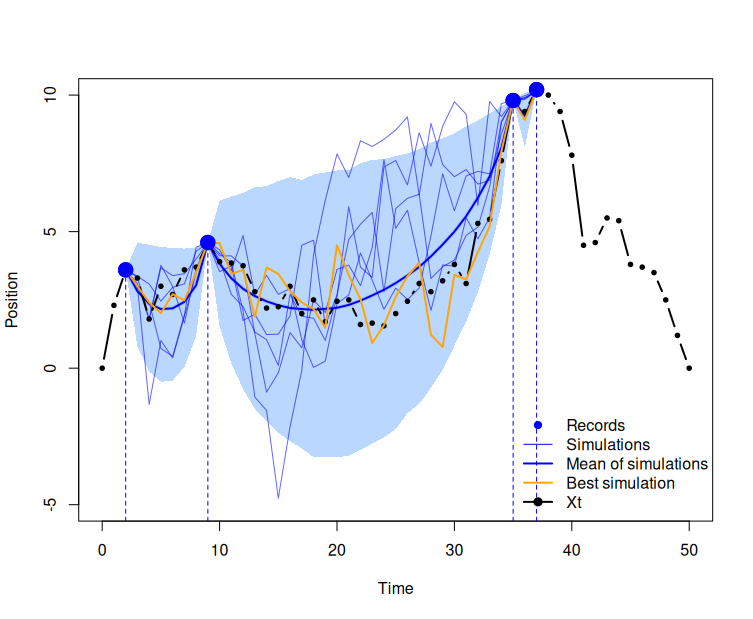}
    \caption{Examples, envelope and mean of simulations between the records of Figure \ref{intro}. Blue area is the envelope (see equations \eqref{eq:quantile} and \eqref{eq:fct_dist}). Blue dots are the records of the series $X_t$ which was presented in Figure \ref{intro} and plotted in black here. The thin blue lines are 10 examples of simulations made for $s=0$ (no smoothing) and $\sigma=1.7$ and the thick blue curve is the mean of 10 000 simulations of trajectories with the same set of hyper parameters. The orange line is the best simulation according to the MSE (see Equation \ref{mse}). }
    \label{fig:simu_sur_intro}
\end{figure}

Figure \ref{fig:simu_sur_intro} shows an example of simulations carried out between the data points of the $X_t$ series presented in Figure \ref{fig:intro}. This series contained four records, so the simulation algorithm simulates a trajectory only between them, i.e. over three intervals delimited by the dotted vertical lines. The hyper parameters were chosen arbitrarily, ensuring that the average of the simulated trajectories (thick blue curve) closely matched the variations in $X_t$. In total, $10,000$ trajectories were simulated using these hyper parameters, enabling the envelope (see equations \eqref{eq:quantile} and \eqref{eq:fct_dist}) shown in blue to be determined. Of these $10,000$ simulations, $10$ were selected at random and are shown in the figure as the one which was the closer to $X_t$ in terms of MSE (see Equation \ref{mse}). We note that between the data points $(R_3,L_3)$ and $(R_4,L_4)$ there is only one time step, so the simulated trajectories are very close to one another. The purpose of this figure is to illustrate the simulations that can be generated using our stochastic generator. However, it is important to note that, as Figure \ref{fig:comparaison_faisceaux} also shows, whilst the hyper parameters allow for a wide variety of trajectories, they can be selected only based on observations. In the next part, we will see how these hyper parameters can be determined if additional observations are available alongside the historical records.

\section{Application to glaciers retreat}\label{application}
\subsection{Context}\label{subsec3}

This method of simulating trajectories conditionally on records can be useful in several fields. In this section, we explore how it can be applied to address a glaciological problem. Glaciers make an important contribution to mountain landscapes \citep{mercier2024atlas} and are the most visible indicators of global warming  \cite[see, e.g.,][]{hock2019high,gorin2024recent}, with the influence of human activities on climate conditions resulting in their accelerated melting and retreat (\cite{theglambieteamCommunityEstimateGlobal2025}). These changes result in local disaster risk \citep{buntgen20252025} and more broadly affect water resources \citep{beniston2014assessing} and freshwater ecosystems \citep{cauvy2019global} in mountain environment and significantly contribute to sea level rise worldwide \citep{zemp2019global}. Understanding the long-term evolution of glaciers is therfore a real challenge and a work of interest. \cite{roeWhatGlaciersTell2011} explains that a glacier is defined by two zones, one of accumulation, the highest where the snow is accumulated, and one of ablation, constituting the lower part which may melt. The equilibrium line defines the meeting point between areas of accumulation and ablation. The mass balance of a glacier corresponds to the difference between the mass it has gained and the mass it has lost over a period of time. It directly links the climate and the geometry of the glacier, i.e. its thickness and length. However, the link between variations in the length of a glacier and climate disturbances is not straightforward. Glaciers can react with varying degrees of delay, known as response time. The response time of the glacier to a climatic perturbation varies from a few years to centuries and the time lag between the climate forcing and the reaction of the front depends on divers parameters like the lenght of the glacier, the thickness of the ice the the slope of the valley. Namely, for the same climate signal, different glaciers react differently. 

A glacier moves along a section of the valley, and the ice flow makes that the frontal position of a glacier always moves along the same axis of the valley. Gravity causes the ice that makes up the glacier to descend along this axis, but according to the mass balance, it is possible for the glacier's front and its equilibrium line to rise along the axis. During a period of positive mass balance, the frontal position of the glacier descends along this axis. Periods of descent may be interspersed with periods of rest or ascent. The length of these periods depend on climate fluctuations and glacier characterics. As the glacier ice flow downwards, it accumulates sediment and rock at its front, forming a bulge called the terminal moraine \citep{jomelli2024synthesis}. When the glacier enters a phase of retreat, the moraine it formed on the way down remains in place. The next time the glacier advances, two scenarios are possible. Either the glacier descends lower than before, destroying the moraine it left behind and forming a new, lower moraine, so that the first is no longer visible. Or it begins its retreat before crushing the previous moraine, forming a second, higher one, and both are still visible. Front variations correspond to the difference in meters of the position of the glacier's front between two given years. They can directly be linked to the positions of the glacier at time $t$, in our representation defined above, which will be noted $X_t$ in the following.

\begin{center}
    \includegraphics[width=0.9\linewidth]{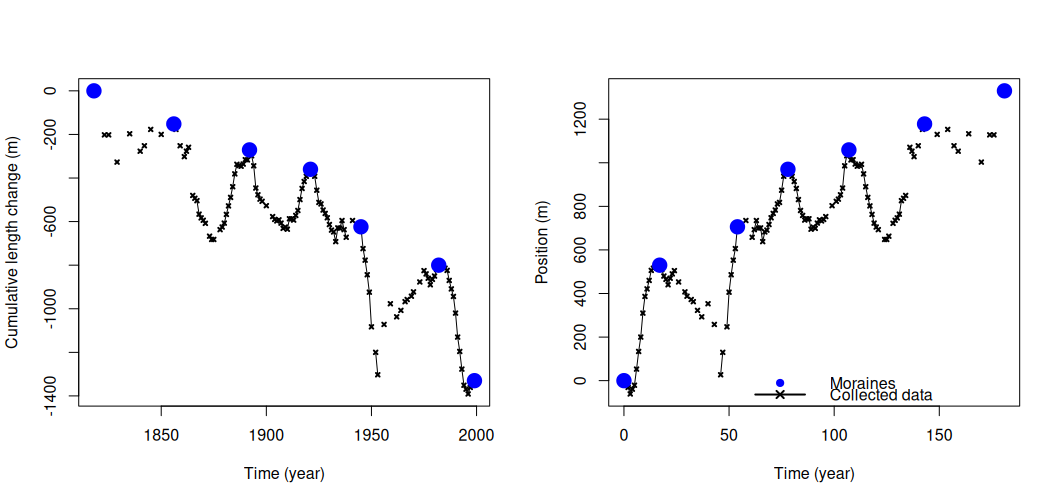}
    \captionof{figure}{Transformation of the observed \textit{Glacier des Bossons} trajectory from 1818 to 1999 to fit the record theory framework. Source: \cite{nussbaumer2012little}. The blue dots are the observed moraines, and the black line is the real observed trajectory of the frontal position of the \textit{Glacier des Bossons}. Both are measured along a longitudial axis corresponding to the glacier flow direction. Subfigure a) : the x-axis represents the years and the y-axis the frontal position relative to the first moraine position that corresponds to the glacier maximal extent during the Littel Ice Age. Subfigure b) : the x-axis represents the number of years between 1999 and the year plotted and the y-axis position relative to the 1999 glacier frontal position.}
\label{fig:plot_mor}
\end{center}

\subsection{\textit{Glacier des Bossons} case-study data}\
The data available for the \textit{Glacier des Bossons} front trajectory is shown in Figure \ref{fig:plot_mor}. The \textit{Glacier des Bossons} is located in the French Alps, in Haute-Savoie. It is a prime example of a well-known glacier for which time-frontal position pairs have been recorded for almost 200 years on a more or less regular basis. This glacier is characterized by its rapid dynamics, namely the ice flows quite quickly and so that the peaks in the trajectory of the frontal position are particularly pronounced \citep{nussbaumer2012little}. This dynamics allowed several phases of advance and retreat over the period studied, which corresponds to the non-monotonic warming phase that occurred since the end of the Little Ice Age, the coldest period of the last millenia \cite[see, e.g.,][]{grove2019little,solomina2015holocene,solomina2016glacier}. In the following, we will focus on the observations made between 1818 and 1999 ($\mathcal{T}=182$), this represents 8 moraines spaced 30 years apart and 222 meters, on average. Before 1862, there is uncertainty regarding the measurements of the glacier frontal position, which vary between 50m and 100m depending on the year. After 1862, the positions are given without uncertainty. The amplitude of the glacier's position on the longitudinal axis is 1,330m as written in the Table \ref{tab:moraines_bossons}. The fact that measurements have only been available for a period of around 200 years is not a problem, as this period is sufficient to track the overall massive retreat of the glacier since the end of the Little Ice Age. Therefore, by limiting ourselves to this period, we can consider that the reference position at 1,330 metres represents a true maximum. On right pannel of the Figure \ref{fig:plot_mor}, we can notice the non-stationarity of this series. Moraines appear to have a frequency greater than $\frac{1}{t}$ where $t$ denotes the year and glacier position seems to have a negative trend. It seems easy enough to link the notion of moraine to that of record. Their position are decreasing and are located on local maxima as defined on Section \ref{intro}. The  right pannel of Figure \ref{fig:plot_mor} shows how the data from the left pannel are used to build a new representation where the moraines are increasing. The x-axis now represents the number of years that have passed since the last moraine was created. Thus, the first blue dot corresponds to the last row of the Table \ref{tab:moraines_bossons}. The y-axis represents the relative position. It is closer to an altitude and corresponds precisely to the distance in meters separating the glacier from its lowest position ever recorded (the rightmost point of the black curve in the left pannel of Figure \ref{fig:plot_mor}).
In the following, the visual representation of the right pannel of Figure \ref{fig:plot_mor} will be used. Moraines correspond to the records of the variable representing the local maxima of the time series of glacier front positions after axis inversion. Noting $V_1,...,V_n$ the series of glacier positions in the left representation of Figure \ref{fig:plot_mor} and $X_1,...,X_n$ the series of glacier positions in that of the right, the link between the two is as $\forall t \in  \{1,...,\mathcal{T}\}, X_t=V_{max}-V_{\mathcal{T}-t}$ where $V_{max} = \max(V_1,...,V_\mathcal{T})$. Then, moraines are exactly defined by the subset $\{R_n\}_{n\geq1} =\{X_t> \max(X_1,...,X_{t-1},X_{t+1})),t\geq 1 \}$.

\begin{table}[h]
    \centering
    \begin{tabular}{|ccc|ccc|}
        \hline
         \multicolumn{3}{|c}{Left pannel of Figure \ref{fig:plot_mor}}&  \multicolumn{3}{|c|}{Right pannel of Figure \ref{fig:plot_mor}}\\
        \hline
        \textbf{Apparition order}&\textbf{Year}&\textbf{Cumulative length change}&\textbf{Record index}&\textbf{Timestep}&\textbf{Relative position}\\
        \hline
       $ t$&$L_t$&$V_t$& n &$L_n=L_{t_{max}}-L_t$&$R_n=X_t=V_{max}-V_{\mathcal{T}-t}$\\
        \hline
         1&1818&0 m&7&181&1330 m \\
         2&1856&-152 m &6&143&800 m\\
         3&1892&-271 m&5&107&624 m\\
         4&1921&-360 m&4&78&360 m\\
         5&1945&-624 m&3&54&271 m\\
         6&1982&-800 m&2&17&152 m\\
         7&1999&-1330 m&1&0&0 m\\

        \hline
    \end{tabular}
        \caption{Moraines of the \textit{Glacier des Bossons}.  The first moraine of the \textit{Glacier des Bossons} series defines the reference position. It was created on the year 1818 which correspond, for the second representation, to the last record, observed at timestep 181. $V_{max}=0$m and $L_{t_{max}}=1999$ and $\mathcal{T}=182$. }
    \label{tab:moraines_bossons}
\end{table}

\subsection{Choice of the hyper parameters in the \textit{Glacier des Bossons} case}\label{hyper parameters}
\begin{figure}[ht]
    \centering
    \begin{subfigure}[t]{0.6\textwidth}
        \centering
        \includegraphics[width=\linewidth]{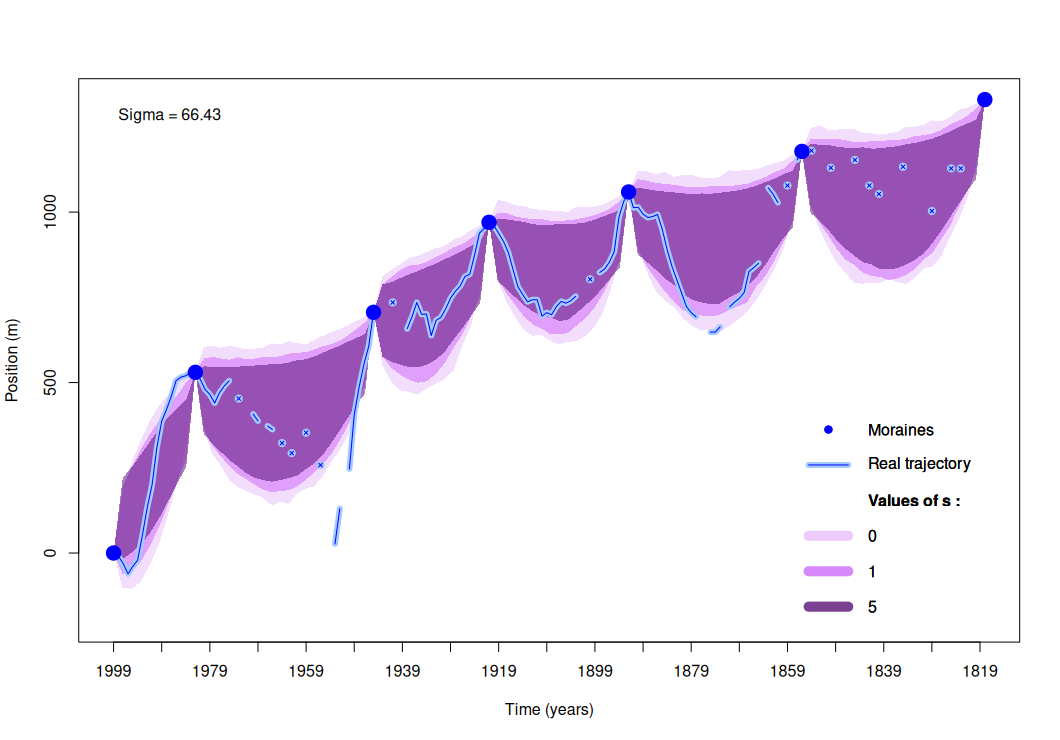}
        \caption{  Influence of $s$ with $\sigma=66.43$ }
        \label{fig:sd_faisceaux}
    \end{subfigure}
    \hfill
    \\
    \begin{subfigure}[t]{0.6\textwidth}
        \centering
        \includegraphics[width=\linewidth]{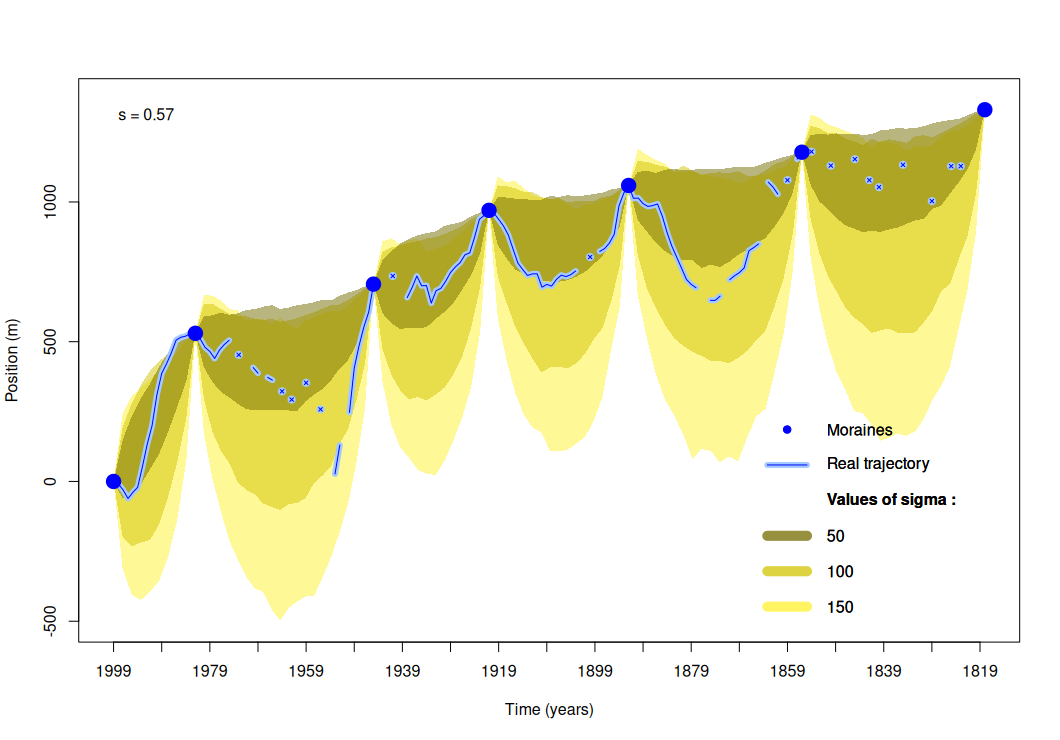}
        \caption{ Influence of $\sigma$ with $s=0.57$}
        \label{fig:ssm_faisceaux}
    \end{subfigure}

    \caption{Envelops of trajectories for different values of $s$ and of $\sigma$.  Subfigure a) : $s$ impact on trajectories with $\sigma$ fixed. Subfigure b) :  $\sigma$ impact on trajectories with $s$ fixed. The blue line is the real \textit{Glacier des Bossons} trajectory. Blue dots are the observed moraines. Areas of different colors are the envelops (see equations \eqref{eq:quantile} and \eqref{eq:fct_dist}) of the trajectories simulated for different values of the studied parameter. When $s$ is fixed at $0.57$, an increasing $\sigma$ causes the area covered by the trajectories to widen downwards.}
    \label{fig:comparaison_faisceaux}
\end{figure}

Applying our approach to the \textit{Glacier des Bossons}, Figure \ref{fig:comparaison_faisceaux}, as Figure \ref{fig:faisceaux_gen}, illustrates the influence of the Brownian generator hyper parameters. It was obtained by fixing respectively the values of $s$ and $\sigma$ and varying the second parameter within the ranges defined by their priors. We did not have much prior information about the laws governing these parameters, so we opted for uniform priors and tested the chosen value ranges by studying the sensitivity of the priors. The ranges of values that are the best for our application are, for $\sigma$, between $50.0$ and $ 150.0$, and, for $s$, between $ 0.0$ and $5.0$, they were informed by exploratory simulations, ensuring that the hyper parameter space was restricted to regions producing physically meaningful outputs. The fixed hyper parameter values were chosen based on the study conducted on data from the \textit{Glacier des Bossons}, whose results are presented in Table \ref{tab:results_param_bossons}. For each of the four values considered, 1000 trajectories were simulated using the Brownian generator applied to the moraines of the \textit{Glacier des Bossons} detailed above Table\ref{tab:moraines_bossons}. The mean of these trajectories is shown as a solid line, and the envelops is represented by the shaded area and have been computed as explained in equations \ref{eq:quantile} and \ref{eq:fct_dist}. The studied values are $s\in\{0,1,5\} $ ($s=0$ corresponds to no smoothing). As shown in Figure \ref{fig:sd_faisceaux}, the higher its value, the smoother the trajectories and the more the bundles tighten horizontally. Variations of this parameter do not affect the vertical amplitude of the bundles, unlike those of $\sigma$, as shown in Figure \ref{fig:ssm_faisceaux}. The studied values are $\sigma\in\{50,100,150\} $.\\

The smoothing function applied to the trajectories between each pair of records, see Equation \eqref{eq:w_s},  prevents the   calculation of a likelihood.   \citep{lueckmannBenchmarkingSimulationBasedInference2021}.
Thus, a likelihood-free approach is needed to choose the hyper parameters $(\sigma,s)^T$. One key aspect is that, given a pair $(\sigma,s)^T$ and records data like $\mathcal{M}=\{(R_n,L_n),n=1,...,N\}$, Algorithm \ref{alg1} described in Section \ref{model} can  rapidly  produced numerous random trajectories. In addition, we have a series of observations for the \textit{Glacier des Bossons} that can be used. Hence, techniques based on simulation schemes could help us to choose the best hyper parameters in our context. For example, approximate Bayesian computation (ABC) methods allow observations and simulations to be compared using summary statistics. As explained in  \cite{FrontierSimulationbasedInferencea}, ABC methods require a threshold choice  to select satisfactory simulation statistics. 
Furthermore, the effectiveness of these ABC schemes  depends on the statistics chosen, and this choice is not always easy to make. To bypass these steps, other simulation methods based on the use of machine learning has been developed in recent years \cite[see, e.g.,][]{papamakarios2018fastepsilonfreeinferencesimulation,Papamakarios2021,sainsbury2024neural}. Neural Bayes estimators (NBE) constitute a flexible class of simulation-based estimators that approximate Bayes estimators under a specified loss function by training neural networks on synthetic datasets generated under prior sampling \citep{papamakarios2018fastepsilonfreeinferencesimulation}. These approaches  allow optimal parameters selection without having to choose statistics or a threshold. 
The objective of this section is to introduce a simulation-based procedure for identifying hyper parameters of our Brownian model that generate trajectories consistent with an observed physical trajectory. The relationship between simulated trajectories and their associated model parameters is learned using a feedforward neural network (MLP – Multi-Layer Perceptron) designed for parameter selection from simulated observations (see, e.g., \cite{chollet2022deep}). In the following it will be referred to as the Neural Network Model (NNM). It was coded in R using the libraries keras and tensorflow  \cite[see, e.g.,][]{kapoor2022deep,abadi2016tensorflow}. The architecture was chosen for its simplicity, universal approximation capability, and robustness when dealing with limited datasets. The Brownian simulation model defines a generative relationship $X \sim p(Z \mid \mathbf{p}) $ between parameters and trajectories, where $\mathbf{p}=(\sigma,s)$ denotes the hyper parameters of the Brownian model and $Z \in \mathbb{R}^{\mathcal{T}}$ is a vector of trajectory features. The NNM is trained to approximate the inverse relationship by learning a mapping $\hat{\mathbf{p}} = f_\theta(X)$ from simulated trajectories to their associated hyper parameters (\cite{370fbeadb5584ba9ab2938431fc4f140}). Here, $\hat{\mathbf{p}} = (\hat{\sigma}, \hat{s})$ is a point estimate of the conditional distribution $p(\mathbf{p} \mid X)$ and the parameters of $f$, $\theta$ are optimized by minimizing the expected gap between true simulation parameters and network predictions by computing  
\begin{equation}
    \theta^\star = \arg\min_\theta \; \mathbb{E}_{\mathbf{p}, X \sim p(X|\mathbf{p})}
\left[ \ell\big(\mathbf{p}, f_\theta(X)\big) \right],
\end{equation}
where $\ell$ is the loss function representing the  Mean Squared Error (MSE) (see Equation \eqref{mse}) between the estimated and target parameters and $\mathbb{E}$ is approximated thanks to the simulations. Optimization is performed with an initial learning rate of $10^{-3}$. Training is conducted on a synthetic dataset split into 80\% for training and 20\% for validation, with early stopping based on the validation performance. By minimizing the loss function between true simulation parameters and their predictions, the NNM learns to associate trajectory features with parameter values, without providing an explicit or interpretable distance measure between trajectories. It approximates the conditional expectation of the parameters given the trajectory representation. Once trained, the model is applied to the observed trajectory $\{X^{obs}_t\}_{1\leq t\leq \mathcal{T}}$ to infer parameter values $\hat{\mathbf{p}}_{\text{obs}} = f_{\theta^*}(\{X^{obs}_t\}_{1\leq t\leq \mathcal{T}})$. These parameters are then used to generate new trajectories from the Brownian model, which are compared to the observed trajectory using MSE (see Equation \eqref{mse}) and MAPE (see Equation \eqref{mape}) metrics (\cite{ABC}). As a result, the parameters represent a point estimate induced by the training data and model architecture rather than a unique or physically interpretable solution. This method does not aim to recover the true parameters underlying the observed trajectory, but rather to restrict the parameter space to regions that are generatively consistent with the observation.

\begin{center}
    \includegraphics[width=0.7\linewidth]{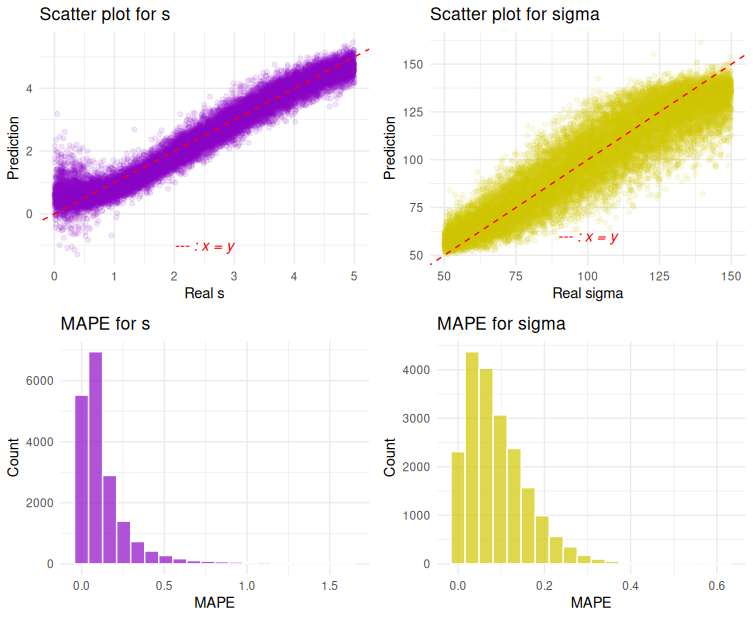}
    \captionof{figure}{Results of the NBI predictions for the test sample (2 000 simulations). First line : scatter plots of the predicted parameters against the real ones. Second line : distribution of the MAPE (see Equation \ref{mape}). First column : parameter $s$. Second column : parameter $\sigma$. }
    \label{fig:scatters}
\end{center}

Figure \ref{fig:scatters} illustrates the prediction results produced by the neural network model trained on the test sample. The top plots show scatter plots of the true values of the parameters $s$ and $\sigma$ against the values predicted by the NN model. \\
The model reproduces the parameter $s$ better than $\sigma $ according to the coefficients of determination ($R^2=0.95$ for $s$ and $R^2=0.85$ for $\sigma$). It slightly overestimates the values of $s$ when they are around $0$.

\begin{center}
\begin{table}%
\centering
\begin{tabular*}{500pt}{@{\extracolsep\fill}lll|lll|ll@{\extracolsep\fill}}

\multicolumn{3}{c}{\textbf{Learning results}} &
\multicolumn{3}{c}{\textbf{Comparisons}} &
\multicolumn{2}{c}{\textbf{Predictions}} \\
\midrule

\textbf{$R^2_{s_B}$} &\textbf{ $R^2_{\sigma_B}$ }& \textbf{$R^2_{mean}$  }&
\textbf{Area ratio }&\textbf{MSE ratio}& \textbf{MAPE ratio}&
\textbf{$ s$ }&\textbf{$ \sigma$ }\\

\midrule
0.95&0.85 & 0.90  & 0.66  & 0.55 &0.60&0.57& 66.43\\
\bottomrule

\end{tabular*}
\caption{Estimations of the hyper parameters for the \textit{Glacier des Bossons}. The three first columns are the coefficients of determination of the linear regression between real values and predictions. The area of the envelops (see equations \eqref{eq:quantile} and \eqref{eq:fct_dist}) of the trajectories simulations with the predicted parameters represents 66\% of the one with random parameters. The MSE between the real trajectory and the trajectories simulations with the predicted parameters represents just 55\% of the one with random parameters. Last two columns are the results of the model prediction with the trajectory of the \textit{Glacier des Bossons} as input.}
\label{tab:results_param_bossons}
\end{table}
\end{center}

\begin{center}
    \includegraphics[width=0.7\linewidth]{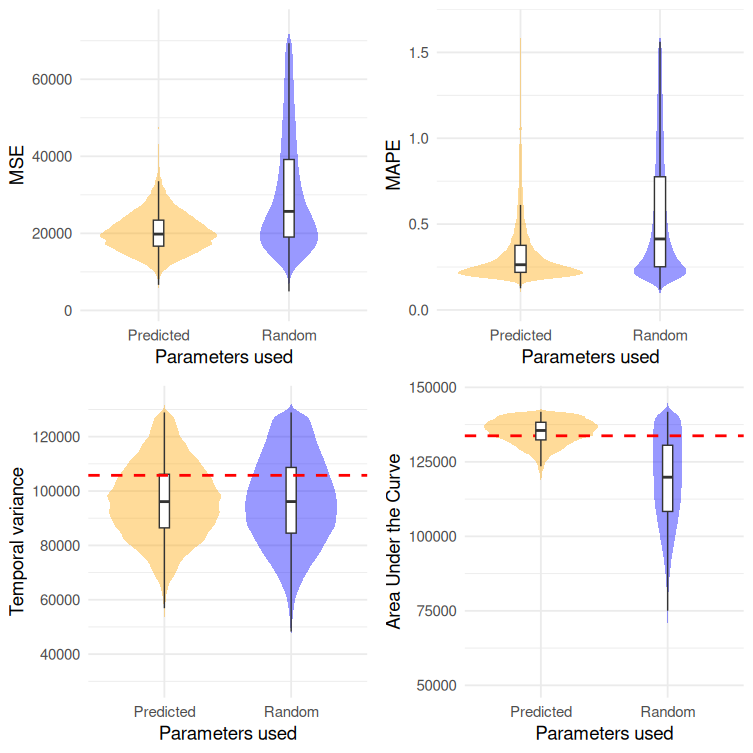}
    \captionof{figure}{Violin plots of the distributions of the 95\% interval of four metrics with the predicted parameters (orange) or with random ones (blue). First line illustrates the errors, MSE  (see Equation \eqref{mse}) and the MAPE (see Equation \eqref{mape}). For both subfigures of the first line, the violin plots show the distribution of the error between the real \textit{Glacier des Bossons} trajectory and the predictions. The left violin plots are made on simulations made with random parameters $\sigma$ and $s$ (chosen according to their prior distribution) and the right violin plots are made on simulations made with the parameters predicted by the neural network (presented in Table \ref{tab:results_param_bossons}).Second line illustrates the distributions of the 95\% interval of two different metrics (see equations \eqref{var}, \eqref{auc}) computed on trajectories made with the predicted parameters or with random ones. Red lines represent the values of those metrics for the interpolated real trajectory presented in Figure \ref{fig:interpolation}. }
    \label{fig:4metrics}
\end{center}

\subsection{Application of the final model to the \textit{Glacier des Bossons} case}\label{application }
Once $s$ and $\sigma$ chosen, they are used to generate 10,000 trajectories from the Brownian model. 
These trajectories are then compared to the real trajectory of the \textit{Glacier des Bossons} on the period over which the latter is avalaible using error metrics such as the MSE (see Equation \eqref{mse}) and the MAPE (see Equation \eqref{mape}).
As a baseline, we generate 10,000 trajectories using randomly sampled parameters from the prior parameter space and evaluate them using the same metrics. This comparison does not constitute a validation of parameter estimation accuracy, as the true parameters underlying the trajectory of the \textit{Glacier des Bossons} are unknown.
Instead, it provides an indirect assessment of the generative consistency of the inferred parameters, i.e., their ability to produce trajectories that are closer to the observed data than those obtained from uninformed parameter sampling. The results are presented in Figure \ref{fig:4metrics}. For both first metrics (MSE and MAPE), the trajectories simulated with random parameters exhibit higher average errors. A lower error for trajectories generated from chosen parameters is expected, as the neural network model is explicitly trained to map trajectory features to compatible regions of the parameter space. The random-parameter baseline should be interpreted as a lower bound on generative performance rather than as a competing inference method. Because the simulator is stochastic, the reported errors are averaged over multiple realizations for each parameter set. Figure \ref{fig:4metrics} also compares, using two other metrics, the trajectories produced using the chosen parameters with those produced using random parameters. It also compares those metrics repartition with their values computed on a linear interpolation of the \textit{Glacier des Bossons} trajectory. It has been done with the \textit{na.approx} function of the \textit{zoo} package of R and the result is presented in Figure \ref{fig:interpolation} . The first is temporal variance (see Equation \eqref{var}), and there is little difference. The next metric is the area under the trajectory curve (see Equation \eqref{auc}). The chosen parameters give it a higher average value and greatly reduce the standard deviation. Then, we can say that choosing the hyper parameters has a significant impact on simulated trajectories and allows us to refine the range of possible trajectories.

\begin{center}
    \includegraphics[width=0.7\linewidth]{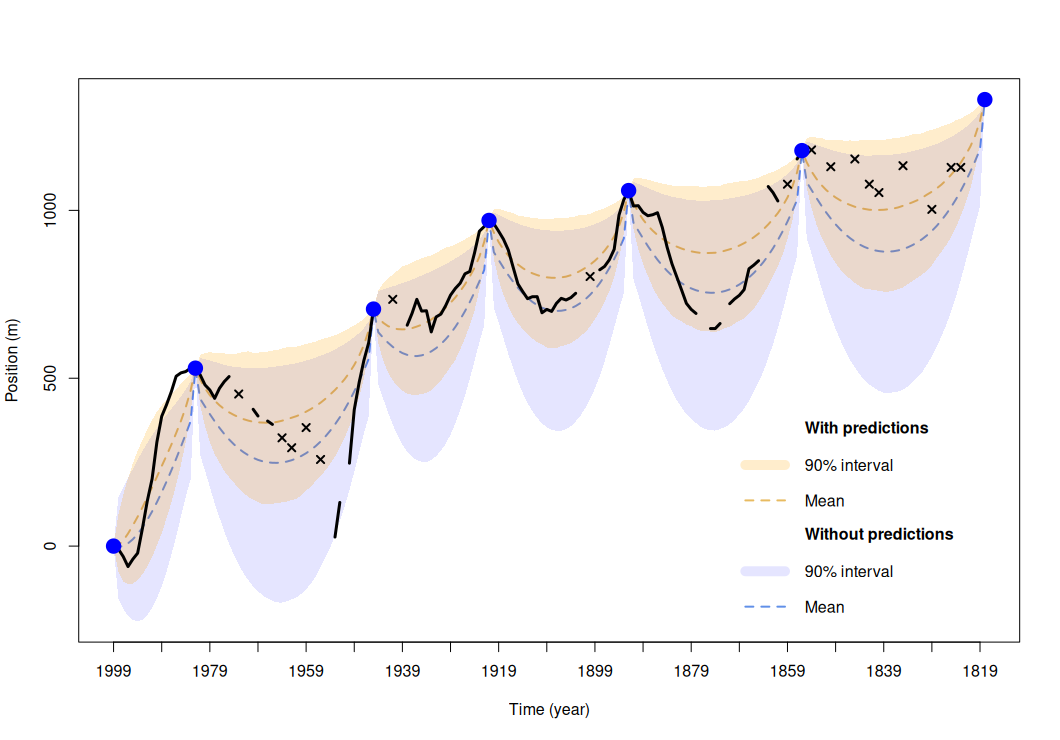}
    \captionof{figure}{ Area and mean of the simulated trajectories with or without the predicted parameters. The black line is the real trajectory of the \textit{Glacier des Bossons}. Blue elements correspond to simulations made with random parameters $\sigma$ and $s$ (chosen according to their prior distribution). Orange elements correspond to simulations made with the parameters predicted by the neural network (presented in table \ref{tab:results_param_bossons}). Transparent areas are the envelopes (see equations \eqref{eq:quantile} and \eqref{eq:fct_dist})  of the simulations and the dotted lines are the mean of the simulations.}
    \label{fig:results_faisceaux1}
\end{center}

\begin{center}
    \includegraphics[width=0.7\linewidth]{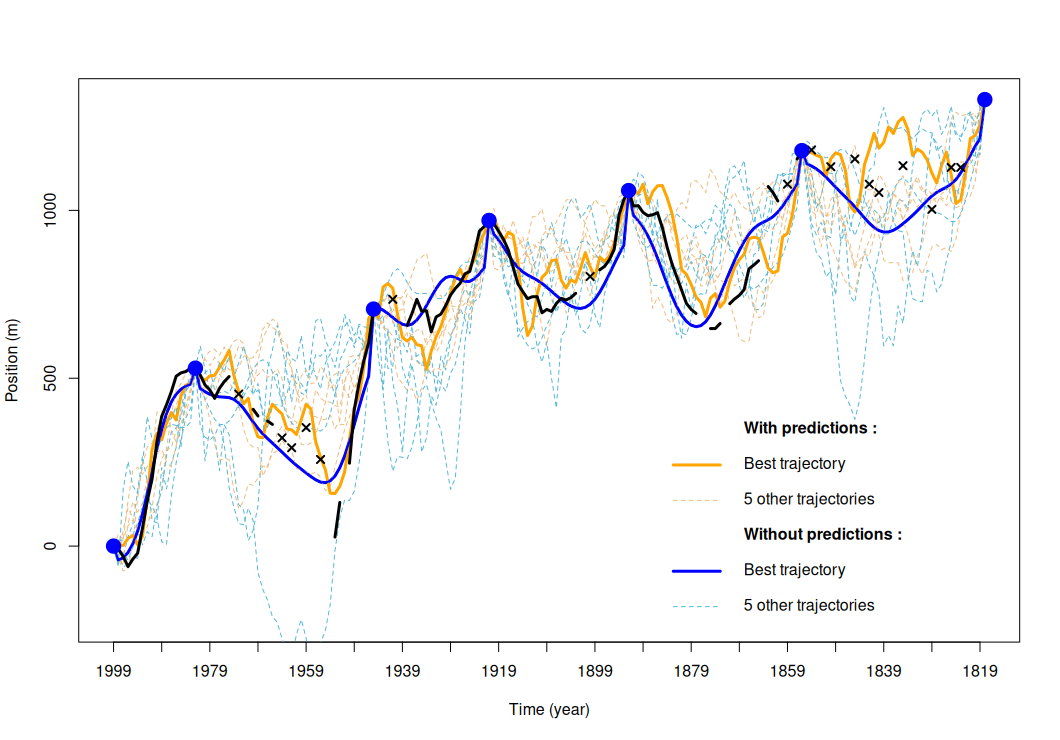}
    \captionof{figure}{Simulated trajectories with or without the predicted parameters. The black line is the real trajectory of the \textit{Glacier des Bossons}. Blue elements correspond to simulations made with random parameters $\sigma$ and $s$ (chosen according to their distribution). Orange elements correspond to simulations made with the parameters predicted by the neural network (presented in table \ref{tab:results_param_bossons}). Ten random trajectories for each parameters area have been represented as well as the best trajectory in term of MSE (see Equation \ref{mse}).}
    \label{fig:results_faisceaux2}
\end{center}

Figures \ref{fig:results_faisceaux1} and \ref{fig:results_faisceaux2} illustrate the simulation results obtained using the stochastic Brownian generator, with as input the moraines of the \textit{Glacier des Bossons} (represented with the blue dots) and, respectively, parameters chosen randomly or parameters predicted by the hyper parameters selection method.
To obtain these graphs, two sets of 1,000 simulations were used. The first, represented in blue, used pairs of parameters drawn from their uniform distributions. The second, in orange, used the parameters predicted by the neural network. In Figure \ref{fig:results_faisceaux1}, the shaded areas correspond to the envelops of the trajectories, and as indicated in Table \ref{tab:results_param_bossons}, the prediction of parameters reduces the width of the bundle by almost a third. Between moraines 2 and 3 (corresponding respectively to years 29 and 70 according to Table \ref{tab:moraines_bossons}), the envelopes of the trajectories with fixed parameters does not fully contain the observed trajectory. Figure \ref{fig:results_faisceaux2} provides an overview of the trajectories that can be simulated by the Brownian model. The trajectories shown in orange are, on average, closer to reality than those shown in blue and admit a variability more consistent with the observations. 

\section{Conclusion and outlooks}\label{ccl}

This work is part of the statistical study of records and offers a new approach to simulate continuous time series, that may deviate from the stationary framework, conditional on a set of specific records. The proposed Brownian method has the Markovian advantage of dividing the simulation problem into several independant ones, between the different records. As illustrated with our case study, the model we arrive at is fast, reliable and, thanks to the hyper parameters, offers great flexibility in the trajectories it is capable of simulating, as well as in the type of the records studied. This offers great freedom in terms of areas of application, namely the various problems for which record series is the sole available data. Many avenues also exist for improving the model, for instance, extending it to a multivariate setting to consider several series of records altogether or accounting for uncertainty in dating and/or missing records.\\
Regarding the application, our stochastic generator provides a statistically rigorous framework for reconstructing glacier front trajectories from sparse moraine positions, which are often the only available proxy for long-term (centennial to millennial) glacier dynamics. Unlike traditional conditional simulation methods, our approach explicitly enforces the record-ordering constraint while accommodating non-stationary behavior, making it particularly suitable for glaciers exhibiting complex advance-retreat patterns such as the \textit{Glacier des Bossons}. A key advantage of our approach including NB inference is its ability to fill gaps by inferring hyper parameters from partially observed trajectories (e.g., 19th–20th century measurements). We thus generate plausible trajectories for unmonitored periods, highlighting its potential for the numerous glacier situations for which partial trajectories have been monitored.
However, the method’s dependence on hyper parameters introduces practical challenges. In many real-world cases, continuous front position records are unavailable, and only moraine-based records exist. While our NB inference method effectively estimates hyper parameters from fragmentary observations, extending its applicability to glaciers with only moraine records requires further development. Two promising directions include: (i) categorizing hyper parameters based on glacier-specific attributes (e.g., slope, altitude, ice volume) using a multi-glacier dataset, and (ii) the multivariate extension of the model to jointly analyze glaciers with full trajectories and those with only moraines, leveraging spatial dependencies among neighboring glaciers. Future work could also integrate physical glacier models like Elmer/Ice \cite[see, e.g.][] {gagliardini2013capabilities, https://doi.org/10.1029/2022GL100363} to validate simulations in unobserved contexts and refine hyperparameter ranges for different glacier types.\\
Reconstructing long-term glacier front fluctuations is critical for understanding glacier-climate interactions, including response times to external forcings (e.g., post-Little Ice Age warming). While physical models (e.g., Elmer/Ice) simulate ice dynamics with high precision, their accuracy relies on past climatic forcings, which are often poorly constrained for pre-instrumental periods. Our stochastic approach, independent of climatic inputs, offers a complementary tool by generating trajectories solely from geomorphological records (moraines). This may provide different  benefits: i) Benchmarking physical models by providing stochastic trajectories to validate or challenge physical model outputs in periods lacking climatic data (e.g., pre-1850); ii) Identifying discrepancies between stochastic and physical simulations that may reveal internal glacier dynamics or climatic forcings not accounted for; iii) Analyzing glacier-climate sensitivity. Ultimately, combining stochastic and physical models may help disentangle climatic influences from intrinsic glacier properties, to, e.g., improve projections of glacier contributions to sea-level rise.

\section*{Acknowledgments}
This research was supported by the French National Research Agency under the grants IRIMONT (ANR-22-EXIR-0003), CIME (ANR-25-CE01-4408-01) and SICIM (ANR-24-EXMA-0012). The authors are grateful to S. Nussbaumer for sharing the \textit{Glacier des Bossons} data set and for fruitful related discussions. N.E. is member of the Grenoble Risk Institute (https://risk.univ-grenoble-alpes.fr/en) and OSUG. \\
Part of Naveau’s research work was supported by the   Agence Nationale de la Recherche via:  the SICIM and SHARE PEPR Maths-Vives project (France 2030 ANR-24-EXMA-0008), EXSTA grant (ANR-23-CE40-0009-01), PORC-EPIC, the PEPR TRACCS program  (PC4 EXTENDING, ANR-22-EXTR-0005), and  the PEPR   IRIMONT (France 2030 ANR-22-EXIR-0003). He has also benefited  from the Geolearning research chair.u

\subsection*{Author contributions}
PN and NE designed the research; NE and VJ acquired the funding; MM, MP, and PN designed the stochastic generator; MM ran the model on the simulated and real data;  All authors contributed to the analysis and interpretation of results; MM drafted the paper. All authors edited the paper. 

\appendix

\section{Properties of the Vervaat transform}\label{appen:vervaat}

\begin{proposition}\label{prop:vervaat}
    Let $\lbrace F_t\rbrace_{t\in\interv{1}{\mathcal{T}}}$ be a time series such that $F_1<F_{\mathcal{T}}$. Let $\lbrace V(F)_t \rbrace_{t\in\interv{1}{\mathcal{T}}}$ be its  Vervaat transform defined in~\eqref{eq:vervaat}. 
    If $T_{max}$ in~\eqref{eq:vervaat} is such that $T_{max}<\mathcal{T}$, then
    \begin{itemize}
        \item[(i)] It has the same endpoints as the time series $\lbrace F_t\rbrace_{t\in\interv{1}{\mathcal{T}}}$, i.e.  $V(F)_1 = F_1$ and $V(F)_{\mathcal{T}} = F_\mathcal{T}$
        \item[(ii)] Its endpoints are local maxima, i.e. $V(F)_{2}\le V(F)_{1}$ and $V(F)_{\mathcal{T}-1}\le V(F)_{\mathcal{T}}$
        \item[(iii)] Its ending point $V(F)_{\mathcal{T}}$ is also a global maximum, i.e. $V(F)_{\mathcal{T}}\ge \max\limits_{1\le t\le \mathcal{T}-1} V(F)_t$
    \end{itemize}
\end{proposition}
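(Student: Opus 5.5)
The plan is to verify all three items by direct substitution into the two branches of \eqref{eq:vervaat}. Write $\tau=T_{max}$. The only property of $\tau$ used is the defining one, $F_u\le F_\tau$ for every $u\in\interv{1}{\mathcal{T}}$, together with the hypothesis $F_1<F_{\mathcal{T}}$.

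\textbf{Preliminary remark.} The hypothesis forces $\tau\ge 2$: if $\tau=1$, then $F_1$ would be a global maximum, contradicting $F_1<F_{\mathcal{T}}$. Combined with $\tau<\mathcal{T}$, the split point $1+\mathcal{T}-\tau$ therefore satisfies
\[
2\le 1+\mathcal{T}-\tau\le \mathcal{T}-1 .
\]
So $t=1$ always lies in the first branch and $t=\mathcal{T}$ always lies in the second, and both branches are non-empty. In both branches the shifted indices $\tau-1+t$ and $\tau-\mathcal{T}+t$ stay in $\interv{1}{\mathcal{T}}$, so $V(F)$ is well defined.

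\textbf{Step (i).} Substitute $t=1$ into the first branch to get $F_1-F_\tau+F_\tau=F_1$. Substitute $t=\mathcal{T}$ into the second branch to get $F_{\mathcal{T}}-F_\tau+F_\tau=F_{\mathcal{T}}$.

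\textbf{Step (iii), done before (ii).} I first establish the global bounds, since (ii) then mostly follows.
\begin{itemize}
\item In the first branch, $V(F)_t=F_1+(F_{\tau-1+t}-F_\tau)\le F_1<F_{\mathcal{T}}$.
\item In the second branch, $V(F)_t=F_{\mathcal{T}}+(F_{\tau-\mathcal{T}+t}-F_\tau)\le F_{\mathcal{T}}$.
\end{itemize}
Together these give $V(F)_t\le V(F)_{\mathcal{T}}$ for all $t$, which is (iii). The first-branch bound also says $V(F)_t\le V(F)_1$ on that branch.

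\textbf{Step (ii).} Since $2\le 1+\mathcal{T}-\tau$, the index $t=2$ lies in the first branch, so $V(F)_2\le F_1=V(F)_1$ follows from Step (iii). The inequality $V(F)_{\mathcal{T}-1}\le V(F)_{\mathcal{T}}$ is a special case of (iii). For readability I would still spell out the case split:
\begin{itemize}
\item If $\tau\ge 3$, then $\mathcal{T}-1$ lies in the second branch and $V(F)_{\mathcal{T}-1}=F_{\mathcal{T}}-F_\tau+F_{\tau-1}\le F_{\mathcal{T}}$.
\item If $\tau=2$, then $\mathcal{T}-1$ is exactly the split point, in the first branch, and $V(F)_{\mathcal{T}-1}=F_1-F_2+F_{\mathcal{T}}\le F_{\mathcal{T}}$ because $F_1\le F_2$.
\end{itemize}

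\textbf{Main obstacle.} There is no real analytic difficulty here. The only delicate point is the bookkeeping of which branch contains $t=1$, $t=2$, $t=\mathcal{T}-1$ and $t=\mathcal{T}$, in particular the boundary case $\tau=2$. This is why the preliminary remark $\tau\ge 2$, derived from $F_1<F_{\mathcal{T}}$, is made first.
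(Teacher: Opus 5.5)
Your proof is correct and follows essentially the same route as the paper: direct substitution into the two branches of \eqref{eq:vervaat}, using only $F_u\le F_{T_{max}}$ and the observation that $F_1<F_{\mathcal{T}}$ excludes $T_{max}=1$. Your explicit treatment of the case $T_{max}=2$, where $t=\mathcal{T}-1$ is the split point and lies in the first branch, is slightly more careful than the paper, which applies the second-branch formula there without comment; this is harmless, since the two branch formulas coincide at the split point.
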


\begin{proof}
Following~\eqref{eq:vervaat}, we have
\begin{align*}
V(F)_{1}=F_{1}-F_{T_{max}}+F_{T_{max}-1+1}
=F_{1}
\end{align*}
and 
\begin{align*}
V(F)_{\mathcal{T}}=F_{\mathcal{T}}-F_{T_{max}}+F_{T_{max}-\mathcal{T}+\mathcal{T}}
=F_{\mathcal{T}}
\end{align*}
Both equalities prove Point (i). Note then that, by definition of $T_{max}$ in~\eqref{eq:vervaat}, we have
\begin{align*}
    V(F)_{2}
    =\begin{cases}
    F_{1}-F_{T_{max}}+F_{T_{max}-1+2} 
    =F_{1}+(F_{T_{max}+1}-F_{T_{max}})\leq F_{1}=V(F)_{1} & \text{if } T_{max}<\mathcal{T} \\
    F_{\mathcal{T}} - F_{T_{max}}+F_{T_{max}-\mathcal{T}+2} =F_2 & \text{if } T_{max}=\mathcal{T}
    \end{cases}
\end{align*}
Besides, since $F_1<F_{\mathcal{T}}$, then we cannot have $T_{max}=1$ and we have
\begin{align*}
    V(F)_{\mathcal{T}-1}
    =F_{\mathcal{T}}-F_{T_{max}}+F_{T_{max}-\mathcal{T}+\mathcal{T}-1} =F_{\mathcal{T}}+(F_{T_{max}-1}-F_{T_{max}})
    \leq F_{\mathcal{T}}=V(F)_{\mathcal{T}}
\end{align*}
Hence, assuming that $T_{max}<\mathcal{T}$, Point (ii) is satisfied.

Finally, we have for $1\le t\le \mathcal{T}-1$
\begin{equation*}
    V(F)_{\mathcal{T}}-V(F)_t
    =\begin{cases}
        (F_{\mathcal{T}}-F_1)+(F_{T_{max}}-F_{T_{max}+t-1}) \ge 0, &\mbox{if }1 \leq t \leq 1+\mathcal{T}-T_{max},\\
   (F_{T_{max}}-F_{T_{max}-\mathcal{T}+t}) \ge 0, &\mbox{if }  1+\mathcal{T}-T_{max}<t\leq \mathcal{T}-1,
    \end{cases}
\end{equation*}
which proves Point (iii).
\end{proof}

\section{Analysis of the rejection step}\label{appen:rejection}

Algorithm~\ref{alg1} relies on a rejection step to generate trajectories between two records. An approximation of the rejection rate can be obtained by recalling that the discrete bridge $\lbrace Y_t\rbrace_{t\in\interv{1}{\mathcal{T}}}$ is built from a discretization of a Brownian bridge $\lbrace \tilde{B}_t\rbrace_{t\in [0,T]}$, $T=\cT-1$, through~\eqref{eq:bridge_mod}. Hence, to evaluate the probability that $T_{max}\le \cT-1$, we can evaluate the probability that the maximum of the Brownian bridge $\lbrace \tilde{B}_t\rbrace_{t\in [0,T]}$ occurs before the time $\cT-2=T-1$. This probability can be evaluated using the next proposition.

\begin{proposition}\label{prop:prob_rej}
    Let $T>0$ and $\lbrace \tilde{B}_t\rbrace_{t\in [0,T]}$ be a Brownian bridge between the value $\tilde B_0=R_1$ and $\tilde B_{T}=R_2>0$, with variance $\sigma>0$. Let $M=\max_{t\in [0,T]}\tilde{B}_t$ and $\theta=\sup\lbrace t\in[0,T] : B_t= M\rbrace$. Then, for any $s \in (0,1)$
    \begin{equation}\label{eq:prob_rej}
        \mathbb{P}(\theta<sT) = 2 s\left(\Phi(-\mu_{s})(1+\mu_s^2)-\mu_s\phi(\mu_s)\right), \quad \text{with } \mu_s = \frac{R_2-R_1}{\sigma}\frac{1-s}{\sqrt{Ts(1-s)}},
    \end{equation}
    where $\phi(\cdot)$ (resp. $\Phi(\cdot)$) denotes the density (resp. cumulative distribution function) of a standard Gaussian distribution. 
\end{proposition}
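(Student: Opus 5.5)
The plan is to compute $\mathbb{P}(\theta<sT)$ from the explicit joint law of the maximum $M$ and its location $\theta$ for a Brownian bridge. The drift then enters through Gaussian integrals, and those integrals produce the factor $\Phi(-\mu_s)(1+\mu_s^2)-\mu_s\phi(\mu_s)$. This factor equals $\mathbb{E}[((Z-\mu_s)_+)^2]$ for $Z\sim\mathcal N(0,1)$, which tells us what the final integral should look like.

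First I reduce the problem. By translation I take $R_1=0$ and write $a=R_2-R_1$; the statement's hypothesis should really be $a>0$. By Brownian scaling I replace $\sigma$ by $1$ and $a$ by $a/\sigma$. For a Brownian path, the location of the maximum on $[0,T]$ is almost surely unique, so $\theta$ is also the argmax. Next I use the standard path decomposition at the maximum (Denisov's decomposition, or directly the reflection principle). For the unconditioned Brownian motion started at $0$, the joint density of $(\theta,M,B_T)$ at $(t,m,b)$ with $m>\max(0,b)$ is
$$
\frac{m}{t}\,p_t(m)\cdot\frac{m-b}{T-t}\,p_{T-t}(m-b),
$$
where $p_u$ is the $\mathcal N(0,u)$ density. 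Each factor is a first-passage (meander) density. Conditioning on $B_T=a$, that is dividing by $p_T(a)$, gives the joint density of $(\theta,M)$ for the bridge.

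The second step, which I expect to be the main computation, is to integrate this density over $m>a$ and $0<t<sT$. I would integrate over $m$ first, at fixed $t$. Combining the two Gaussian exponents $m^2/2t+(m-a)^2/2(T-t)$ and completing the square in $m$ gives
$$
-\frac{a^2}{2T}-\frac{T}{2t(T-t)}\Bigl(m-\frac{ta}{T}\Bigr)^2 .
$$
The factor $e^{-a^2/2T}$ cancels against $p_T(a)$. What remains is $m(m-a)$ against a Gaussian centred at $ta/T$ with variance $t(T-t)/T$, restricted to $m>a$. Setting $m=ta/T+\sqrt{t(T-t)/T}\,z$, the lower limit becomes $z=\mu(t):=a\,(1-t/T)\big/\sqrt{t(T-t)/T}$. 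This matches $\mu_s$ when $t=sT$, with $\sigma=1$.

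After this substitution, $m(m-a)$ becomes a quadratic in $z$ with a linear term. So the $m$-integral is a combination of the truncated moments $\int_\mu^\infty z^k\phi(z)\,dz$ for $k=0,1,2$, with coefficients depending on $t$. The $t$-integral is where I expect the difficulty. Rather than integrate directly, I would guess the answer $F(s)=2s\,\mathbb{E}[((Z-\mu_s)_+)^2]$. I would then check two things: that $\frac{d}{ds}F(s)$ equals $T$ times the $m$-integrated density at $t=sT$, and that $F(0^+)=0$. The derivative is manageable because $\frac{d}{d\mu}\mathbb{E}[((Z-\mu)_+)^2]=-2\,\mathbb{E}[(Z-\mu)_+]$, and $\mu_s$ has an explicit derivative in $s$.

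As a fallback, I would condition on $\tilde B_{sT}=x$. The event $\theta<sT$ is then the event that the maximum over $[0,sT]$ exceeds the maximum over $[sT,T]$. The two sub-bridges are independent, and each has the explicit law $\mathbb{P}(\max<m)=1-e^{-2(m-x)(m-y)/\tau}$. That gives a direct but messier Gaussian integral over $x$ and $m$. Finally I would restore $\sigma$ by scaling, and record the sanity check that $a=0$ gives $\mu_s=0$ and hence $\mathbb{P}=s$, the uniform argmax of the standard bridge.
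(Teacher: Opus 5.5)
Your route works and, once one constant is corrected, gives exactly the stated formula. The trivariate density you start from is too small by a factor of $2$. For Brownian motion started at $0$, the correct law (Karatzas--Shreve) is
\[
\mathbb{P}(\theta\in\dd t,\,M\in\dd m,\,B_T\in\dd b)=\frac{m(m-b)}{\pi\,t^{3/2}(T-t)^{3/2}}\exp\Bigl(-\frac{m^2}{2t}-\frac{(m-b)^2}{2(T-t)}\Bigr)\,\dd t\,\dd m\,\dd b ,
\]
which is twice the product of the two first-passage densities. You can check this by integrating your product over $t$. That integral is the convolution of the first-passage densities to levels $m$ and $m-b$, i.e.\ $\frac{2m-b}{T}p_T(2m-b)$. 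The reflection principle, however, gives $2\,\frac{2m-b}{T}p_T(2m-b)$ for the joint density of $(M,B_T)$.

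With your constant, three things go wrong. The derivative check fails by exactly a factor $2$. The computed probability is $s\,\mathbb{E}[((Z-\mu_s)_+)^2]$, which tends to $1/2$ as $s\to1$. Your own $a=0$ sanity check returns $s/2$.

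With the factor restored, everything closes. Here $a$ stands for $(R_2-R_1)/\sigma$. Write $m=\tau(z+\nu)$ and $m-a=\tau(z-\mu)$, where $\tau^2=t(T-t)/T$ and $\mu+\nu=a/\tau$. The $m$-integral then gives the density of $\theta$ at $t=sT$ as
\[
\frac{2}{T}\Bigl(\mathbb{E}\bigl[((Z-\mu_s)_+)^2\bigr]+\frac{a}{\sqrt{Ts(1-s)}}\,\mathbb{E}\bigl[(Z-\mu_s)_+\bigr]\Bigr).
\]
Since $\frac{\dd\mu_s}{\dd s}=-\frac{a}{2\sqrt{T}\,s^{3/2}\sqrt{1-s}}$, the derivative of $F(s)=2s\,\mathbb{E}[((Z-\mu_s)_+)^2]$ is exactly $T$ times this density. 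Together with $F(0^+)=0$ and integrability of the density near $t=0$, this proves the claim. You are also right that the hypothesis should read $R_2>R_1$; the paper's proof silently assumes $a=R_2-R_1>0$.

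Your route differs from the paper's in the order of integration.
\begin{itemize}
\item The paper starts from a Karatzas--Shreve formula for $\mathbb{P}(\eta<sT,\hat m\in\dd\hat b, W_T\in\dd\hat a)$. In it, time is already integrated up to $sT$, which produces the $\Phi(-\mu_\pm/\chi)$ terms. It then integrates over the level $\hat b>\hat a$ using Owen's tabulated integrals of the type $\int u\,\Phi(\alpha u+\beta)\phi(u)\,\dd u$ and $\int\Phi$, and finally divides by the $\mathcal{N}(0,T)$ density.
\item You integrate the level first, which after completing the square reduces to truncated Gaussian moments. You then handle the time integral by differentiating the candidate answer.
\end{itemize}
Your route is self-contained apart from the trivariate density, and it makes the structure $2s\,\mathbb{E}[((Z-\mu_s)_+)^2]$ visible. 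The cost is that it verifies the closed form rather than deriving it, which is harmless here since the statement supplies it.

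The paper's route derives the formula directly but relies on external table formulas and heavier algebra. Its final displays in fact drop the factor $\mu_s$ in front of $\phi(\mu_s)$, contradicting the statement. Your $a=0$ check catches this immediately, since only the statement's version returns $s$.
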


\begin{proof}
    Without loss of generality, let us assume that $R_1=0$, and that $R_2=a>0$. The general case $R_1\neq 0$ can then be obtained by simple translation.

    A Brownian bridge between the value $\tilde B_0=0$ and $\tilde B_{T}=a>0$, with variance $\sigma^2>0$ is a Brownian motion with variance $\sigma^2$, conditioned to take the value $\tilde B_{T}=a>0$. Let then $\lbrace W_t\rbrace_{t\in [0,T]}$ with unit-variance, so that $\lbrace \sigma W_t\rbrace_{t\in [0,T]}$ defines a Brownian motion with variance $\sigma^2$, and take $m=\max_{t\in [0,T]}\sigma W_t=\sigma \max_{t\in [0,T]} W_t$ and $\eta=\sup\lbrace t\in[0,T] : \sigma W_t= m\rbrace=\sup\lbrace t\in[0,T] :  W_t= m/\sigma\rbrace$. Hence, we have
    \begin{equation*}
        \mathbb{P}(\theta<sT) = \mathbb{P}(\eta<sT\vert  W_{T} = \hat a)
    \end{equation*}
    and more generally,  for $b\ge a$, and $\dd b$ an infinitesimal interval around $b$, 
    \begin{align*}
        \mathbb{P}(\theta<sT, M\in \dd b)=\mathbb{P}(\eta<sT,m \in \dd b \vert \sigma W_{T} = a)
        =\mathbb{P}(\eta<sT, \hat m\in \dd\hat b \vert  W_{T} = \hat a)
    \end{align*}
    where $\hat m = m/\sigma = \max_{t\in [0,T]} W_t$, $\hat b= b/\sigma$ and $\hat a = a/\sigma$. This last probability can be computed by noting that, following \cite[p .101]{karatzas2014brownian}
     \begin{align*}
        \mathbb{P}(\eta<sT, \hat m\in \dd\hat b ,  W_{T} \in\dd \hat a)
        =\frac{2}{\sqrt{T^3}}\left[ 
        (2\hat{b}-\hat{a})\Phi\left(-\frac{\mu_+}{\chi}\right)\phi\left(\frac{2\hat{b}-\hat{a}}{\sqrt{T}}\right)
        -\hat{a}\Phi\left(-\frac{\mu_-}{\chi}\right)\phi\left(\frac{\hat{a}}{\sqrt{T}}\right)
        \right]\dd\hat{a}\dd\hat{b}
    \end{align*}
    where
    \begin{equation*}
        \mu_{\pm}=\frac{\hat b(T-S)\pm (\hat a-\hat b)S}{T}, \quad S=sT, \quad \chi^2=\frac{S(T-S)}{T}
    \end{equation*}
    Hence,
         \begin{align*}
        \mathbb{P}(\eta<sT,  W_{T} \in\dd \hat a)
        &=\frac{2\dd\hat{a}}{\sqrt{T^3}}\int_a^\infty\left[ 
        (2\hat{b}-\hat{a})\Phi\left(-\frac{\mu_+}{\chi}\right)\phi\left(\frac{2\hat{b}-\hat{a}}{\sqrt{T}}\right)
        -\hat{a}\Phi\left(-\frac{\mu_-}{\chi}\right)\phi\left(\frac{\hat{a}}{\sqrt{T}}\right)
        \right]\dd\hat{b} \\
        &=\frac{2\dd\hat{a}}{\sqrt{T^3}}\left[ 
        \int_a^\infty(2\hat{b}-\hat{a})\Phi\left(-\frac{\hat{b}(T-2S)+\hat as}{T\chi}\right)\phi\left(\frac{2\hat{b}-\hat{a}}{\sqrt{T}}\right)\dd\hat{b}
        -\hat{a}\phi\left(\frac{\hat{a}}{\sqrt{T}}\right)\int_a^\infty \Phi\left(-\frac{\hat bt - \hat as}{T\chi}\right)\dd\hat{b} 
        \right]\\
        &=\frac{2\dd\hat{a}}{\sqrt{T^3}}\left[ 
        \frac{T}{2}\int_{a/\sqrt{T}}^\infty u~\Phi\left(-\frac{u(T-2S)+\hat a \sqrt{T}}{2\sqrt{T}\chi}\right)\phi\left(u\right)\dd u
        -\hat{a}\phi\left(\frac{\hat{a}}{\sqrt{T}}\right)\int_a^\infty \Phi\left(-\frac{\hat bt - \hat as}{T\chi}\right)\dd\hat{b} 
        \right]\\
        &=\frac{2\dd\hat{a}}{\sqrt{T^3}}\bigg[
        \frac{T}{2}\left(
        -\frac{T-2S}{T}\phi\left(\frac{\hat a}{\sqrt{T}}\right)\left(1-\Phi\left(\frac{\hat{a}(T-S)}{T\chi}\right)\right)+\phi\left(\frac{\hat a}{\sqrt{T}}\right)\Phi\left(-\frac{\hat{a}(T-S)}{T\chi}\right)\right)\\
        & \quad\quad -\hat{a}\phi\left(\frac{\hat{a}}{\sqrt{T}}\right)
        \chi\left(
        \left(-\frac{\hat{a}(T-S)}{T\chi}\right)\Phi\left(-\frac{\hat{a}(T-S)}{T\chi}\right)+\phi\left(-\frac{\hat{a}(T-S)}{T\chi}\right)
        \right)
        \bigg]
    \end{align*}
where we used the formulas in~\cite[Formulae 10,011.2 and 10,000]{owen1980table} to compute the two integrals. Taking then
\begin{equation*}
    \mu_s = \frac{\hat{a}(T-S)}{T\chi}=\hat{a}\left(1-\frac{S}{T}\right)\frac{\sqrt{T}}{\sqrt{S(T-S)}}=\hat{a}\frac{1-s}{\sqrt{Ts(1-s)}}=\hat{a}\sqrt{\frac{1-s}{sT}}
\end{equation*}
we obtain
    \begin{align*}
        \mathbb{P}(\eta<sT,  W_{T} \in\dd \hat a)
        &=\frac{2\dd\hat{a}}{\sqrt{T^3}}\phi\left(\frac{\hat a}{\sqrt{T}}\right)\bigg[
        -\frac{T-2S}{2}\left(1-\Phi\left(\mu_s\right)\right)+\frac{T}{2}\Phi\left(-\mu_s\right)
          -\hat{a}
        \chi\left(
        -\mu_s\Phi\left(-\mu_s\right)+\phi\left(\mu_s\right)
        \right)
        \bigg]
    \end{align*}
Then, noting that $\chi\hat{a}=\sqrt{Ts(1-s)}\hat{a}=Ts \mu_s=S\mu_s$
and that $\Phi\left(\mu_s\right)=1-\Phi\left(-\mu_s\right)$ (by symmetry of the $\phi$), we get
    \begin{align*}
        \mathbb{P}(\eta<sT,  W_{T} \in\dd \hat a)
        &=\frac{2\dd\hat{a}}{\sqrt{T^3}}\phi\left(\frac{\hat a}{\sqrt{T}}\right)\bigg[
        S(1+\mu_s^2)\Phi\left(-\mu_s\right)
          -S\phi\left(\mu_s\right)
        \bigg]\\
        &=2s\bigg[
       (1+\mu_s^2)\Phi\left(-\mu_s\right)
          -\phi\left(\mu_s\right)\bigg]
          \frac{1}{\sqrt{T}}\phi\left(\frac{\hat a}{\sqrt{T}}\right)
        \dd\hat{a}
    \end{align*}
Finally,  since $W_T$ is normally distributed with mean $0$ and variance $T$, we have
\begin{equation*}
    \mathbb{P}( W_{T} \in\dd \hat a) = \frac{1}{\sqrt{2\pi T}}\exp\left(-\frac{\hat{a}^2}{2T}\right)\dd\hat{a}=\frac{1}{\sqrt{T}}\phi\left(\frac{\hat a}{\sqrt{T}}\right)
        \dd\hat{a}
\end{equation*}
which allows us to conclude that
    \begin{align*}
        \mathbb{P}(\theta<sT) =\mathbb{P}(\eta<sT \vert W_{T} = \hat a)
        &=\frac{\mathbb{P}(\eta<sT,  W_{T} \in\dd \hat a)}{\mathbb{P}( W_{T} \in\dd \hat a)}
        =2s\bigg[
       (1+\mu_s^2)\Phi\left(-\mu_s\right)
          -\phi\left(\mu_s\right)\bigg]
    \end{align*}
which concludes the proof.

\end{proof}

Circling back to the rejection step, the probability that $T_{max}<\cT$ can be approached by its continuous counterpart $\mathbb{P}(\theta<sT)$ in Proposition~\eqref{prop:prob_rej}, with $s=(T-1)/T=(\cT-2)/(\cT-1)$. Note in particular that this probability can be made larger by making $\mu_s$ approaching zero, which can be achieved by either working with a large(r) time series size $\cT=T+1$, or by working with larger variances $\sigma^2$ for the Brownian bridge.

\section{Properties of the smoothing}\label{appen:smooth}

The goal of this section is to show that the smoothing steps yielding the time series  $\lbrace X_t\rbrace_{t\in\interv{1}{\mathcal{T}}}$ in \eqref{eq:w_s} do not affect the two key properties of the orginal time series  $\lbrace Z_t\rbrace_{t\in\interv{1}{\mathcal{T}}}$, namely that its endpoints are $X_1=Z_1=R_1$ and $X_{\cT}=Z_{\cT}=R_2$, and are also records. 

The parameter $\alpha_{\mathcal{T}}$ introduced in~\eqref{eq:def_hatz} plays a key role in this regard, as it ensures that the intermediate smoothing $\lbrace\widetilde{Z}_t\rbrace_{t\in\interv{1} {\mathcal{T}}}$ stays sufficiently below the value of the first record $Z_1=R_1$ near $t=1$ to ensure that the final smoothing $\lbrace X_t\rbrace_{t\in\interv{1}{\mathcal{T}}}$ can admit a record at $t=1$. We note that $\lbrace\widetilde{Z}_t\rbrace_{t\in\interv{1} {\mathcal{T}}}$ is modified only when
    \begin{equation*}
        \widetilde{Z}_2 > \frac{\alpha_{\mathcal{T}}}{1-w^{(s)}(\mathcal{T}-1)}Z_1
    \end{equation*}
    where by definition of $\alpha_{\cT}$,
    \begin{equation}\label{eq:aw}
        \frac{\alpha_{\mathcal{T}}}{1-w^{(s)}(\mathcal{T}-1)}=\frac{1-w^{(s)}(\mathcal{T}-3)}{1-w^{(s)}(\mathcal{T}-1)}+\frac{w^{(s)}(\mathcal{T}-3)}{1-w^{(s)}(\mathcal{T}-1)}\exp\left(-\frac{2\cT-3}{2s^2}\right).
    \end{equation}
    In practice, for small smoothing parameters $(s/\cT)\ll 1$, the quantity~\eqref{eq:aw} becomes close to $1$, while for large smoothing parameters $(s/\cT)\gg 1$, the second term in~\eqref{eq:aw} explodes. 

Then, as detailed in the proposition below, a simple condition can be checked to ensure that  $(R_1,1)$ and $(R_2,\mathcal{T})$ are also records $\lbrace X_t\rbrace_{t\in\interv{1}{\mathcal{T}}}$: namely, the ratio between the record values should satisfy the constraint:
    \begin{equation}\label{eq:cond_rec}
        \frac{R_2}{R_1} < M_{\cT}^{(s)}:=\frac{1}{1-w^{(s)}(1)}\left(1-\frac{w^{(s)}(\mathcal{T}-1)}{w^{(s)}(\mathcal{T}-2)}\right) + \frac{1-\alpha_{\mathcal{T}}}{w^{(s)}(\mathcal{T}-2)}(1-w^{(s)}(\mathcal{T}-2)).
    \end{equation}
This requirement may seem restrictive at first sight, but is generally satisfied in practice due to two mechanisms. Indeed, note that the bound $M_{\cT}^{(s)}$ in~\eqref{eq:cond_rec} can be rewritten as
\begin{align*}
M_{\cT}^{(s)} &= \frac{1}{1-w^{(s)}(1)}\left(1-\frac{w^{(s)}(\mathcal{T}-1)}{w^{(s)}(\mathcal{T}-2)}\right) + \frac{w^{(s)}(\mathcal{T}-3)}{w^{(s)}(\mathcal{T}-2)}\left(1-\frac{w^{(s)}(\mathcal{T}-1)}{w^{(s)}(\mathcal{T}-2)}\right)(1-w^{(s)}(\mathcal{T}-2))\\
&= \left(1-\exp\left(-\frac{2\cT-3}{2s^2}\right)\right) \left[\frac{1}{1-w^{(s)}(1)}
 + \exp\left(\frac{2\cT-5}{2s^2}\right)(1-w^{(s)}(\mathcal{T}-2))\right].
\end{align*}
Hence, for small smoothing parameters $(s/\cT)\ll 1$, we have $(2\mathcal{T}-5)/(2s^2) \gg 1$, $w^{(s)}(\cT-2)\approx 0$, and therefore  $M_{\cT}^{(s)}\rightarrow \infty$, meaning that asymptotically, the constraint~\eqref{eq:cond_rec} is always satisfied.
Conversely, for large smoothing parameters $(s/\cT)\gg 1$, we have $w^{(s)}(1)\approx 1 - 1/(2s^2)$ and therefore $M_{\cT}^{(s)}\approx 2\cT-3$. Hence, working with a larger value $\mathcal{T}$ (i.e. with a more discretized time series) allows to make the constraint more easily satisfied.

\begin{proposition}
    Let   $\{Z_t\}_{t\in\interv{1}{\mathcal{T}}}$ be a time series admitting as endpoints and records the pairs $(R_1,1)$ and $(R_2,\mathcal{T})$ (with $R_2>R_1$). 
    If the ratio $R_2/R_1$ satisfies~\eqref{eq:cond_rec},
    then the smoothed time series $\{X_t\}_{t\in\interv{1}{\mathcal{T}}}$ defined in~\eqref{eq:w_s} also admits as endpoints and records $X_1=R_1$ and $X_{\mathcal{T}}=R_2$.
\end{proposition}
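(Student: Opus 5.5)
The proof splits into three checks on \eqref{eq:w_s}: the endpoint identities, the global maximum at $t=\cT$, and the local maximum at $t=1$. Write $u_t=w^{(s)}(t-1)$, $v_t=w^{(s)}(\cT-t)$ and $W=w^{(s)}(\cT-1)$. Then $X_t=a_tZ_1+b_t\widehat Z_t+c_tZ_{\cT}$ with
\[
a_t=\frac{u_t(1-v_t)}{1-W},\qquad b_t=(1-u_t)(1-v_t),\qquad c_t=\frac{v_t(1-u_t)}{1-W},
\]
all nonnegative. I assume $s>0$; the case $s=0$ is read with the convention $w^{(0)}(k)=\mathbb 1(k=0)$, for which $X=Z$. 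I also assume $R_1>0$, since the record values lie in $\R_+$ and the ratio condition \eqref{eq:cond_rec} only makes sense in that setting.

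\textbf{Endpoints.} At $t=1$ we have $u_1=1$ and $v_1=W$. Hence $b_1=c_1=0$ and $a_1=(1-W)/(1-W)=1$, which gives $X_1=Z_1=R_1$. At $t=\cT$ the situation is symmetric: $c_{\cT}=1$ and $a_{\cT}=b_{\cT}=0$, so $X_{\cT}=R_2$.

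\textbf{Global maximum at $\cT$.} Since $(R_2,\cT)$ is a record of $Z$, every $Z_j\le R_2$. The Gaussian smoothing $\widetilde Z_t$ is a convex combination of the $Z_j$, so $\widetilde Z_t\le R_2$. The modification at $t=2$ only lowers the value, so $\widehat Z_t\le R_2$ as well, and $Z_1=R_1<R_2$. It follows that $X_t\le (a_t+b_t+c_t)R_2$. It therefore suffices to show $a_t+b_t+c_t\le 1$. A direct computation gives
\[
a_t+b_t+c_t-1=\frac{-u_tv_t+(u_t+v_t-u_tv_t)W}{1-W}.
\]
Now $(t-1)^2+(\cT-t)^2\le(\cT-1)^2$, so $u_tv_t\ge W$. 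Also $0\le u_t+v_t-u_tv_t\le 1$. Together these give $u_tv_t\ge (u_t+v_t-u_tv_t)W$, so the sum is at most $1$. The record condition needs a strict inequality $X_{\cT-1}<X_{\cT}$. The displayed inequality is strict except in degenerate configurations, and I will treat those separately. The strict version of $Z_{\cT-1}<R_2$, which comes from Proposition~\ref{prop:vervaat} or the acceptance rule, then propagates through the positive weight $b_t$ or $c_t$.

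\textbf{Local maximum at $1$.} This is the main obstacle: showing $X_2<R_1$. Expand
\[
X_2=a_2R_1+b_2\widehat Z_2+c_2R_2,
\]
with $u_2=w^{(s)}(1)$ and $v_2=w^{(s)}(\cT-2)$. Use the truncation in \eqref{eq:def_hatz}, namely $\widehat Z_2\le \alpha_{\cT}R_1/(1-W)$. Then $X_2<R_1$ reduces to a linear inequality of the form
\[
c_2R_2<\Big(1-a_2-b_2\frac{\alpha_{\cT}}{1-W}\Big)R_1.
\]
Dividing by $c_2R_1>0$ should give exactly $R_2/R_1<M_{\cT}^{(s)}$. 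I would verify this by substituting the definition of $\alpha_{\cT}$ and the identity $w^{(s)}(\cT-1)/w^{(s)}(\cT-2)=\exp(-(2\cT-3)/(2s^2))$, and then matching the result to the rewritten form of $M_{\cT}^{(s)}$ given before the proposition. I expect the only real work to lie in this bookkeeping. If the constants do not match exactly, I would instead show that the right-hand side above is at least $c_2M_{\cT}^{(s)}$, which is still enough for the argument.
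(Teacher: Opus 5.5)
Your argument is correct. The endpoint check and the local maximum at $t=1$ follow the paper's route: bound $\widehat Z_2$ by its truncation level, then reduce to \eqref{eq:cond_rec}. The bookkeeping you left open closes exactly, and it does not require substituting $\alpha_{\mathcal{T}}$. Write $p=w^{(s)}(1)$, $q=w^{(s)}(\mathcal{T}-2)$ and $W=w^{(s)}(\mathcal{T}-1)$. The identity $1-W-p(1-q)=(1-p)(1-q)+(q-W)$ gives $(1-W)\bigl(1-a_2-b_2\alpha_{\mathcal{T}}/(1-W)\bigr)=(1-p)(1-q)(1-\alpha_{\mathcal{T}})+(q-W)$. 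Dividing by $(1-W)c_2=q(1-p)$ then yields $M_{\mathcal{T}}^{(s)}$ term by term.

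Where you genuinely differ from the paper is at $t=\mathcal{T}$. The paper bounds only $X_{\mathcal{T}-1}$. It shows the coefficient sum there is at most $1$ via convexity of $k\mapsto\exp(k^2/(2s^2))$; its $\beta\ge 0$ is exactly your $u_tv_t-(u_t+v_t-u_tv_t)W\ge 0$ at $t=\mathcal{T}-1$. You prove the same inequality for every $t$ from two facts: $u_tv_t\ge W$, which is $(t-1)^2+(\mathcal{T}-t)^2\le(\mathcal{T}-1)^2$, and $u_t+v_t-u_tv_t\le 1$. This is more elementary. More importantly, it gives the global bound $X_t<R_2$ for all $t<\mathcal{T}$. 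That global bound is what the record definition with $S_t=\{0,\dots,t-1,t+1\}$ requires at $t=\mathcal{T}$. It is also what Appendix~\ref{appen:res_alg2} later uses when it chains $X_{L_{n+1}}\ge\max(X_{L_n},\dots,X_{L_{n+1}-1})$. The paper's proof as written checks only $t=\mathcal{T}-1$, so it establishes only a local maximum there.

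One small repair: your strictness remark is muddled. The weight $c_t$ multiplies $R_2$ itself, so it cannot carry strictness. Proposition~\ref{prop:vervaat} also gives only non-strict inequalities. Instead, note that for $s>0$ and $2\le t\le\mathcal{T}-1$ the weight $a_t$ is positive. Then $a_tR_1<a_tR_2$ makes the bound strict with no degenerate cases, which is how the paper obtains strictness. Finally, $R_2>R_1>0$ is what lets you pass from $(a_t+b_t+c_t)R_2$ to $R_2$.
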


\begin{proof}
    It is possible to check that, by definition~\eqref{eq:w_s}, we have $X_1=Z_1=R_1$ and $X_{\mathcal{T}}=Z_{\mathcal{T}}=R_2$. Let us now show that these endpoints are records.

    On the one hand, for $t=1$, since $X_1=Z_1=R_1$ and $X_\mathcal{T}=Z_\mathcal{T}=R_2$, we have
    \begin{align*}
        X_2-X_1
        &=w^{(s)}(1)\frac{1-w^{(s)}(\mathcal{T}-2)}{1-w^{(s)}(\mathcal{T}-1)}Z_1 + \left(1-w^{(s)}(\mathcal{T}-2)\right)\left(1-w^{(s)}(1)\right)\widehat{Z}_2+w^{(s)}(\mathcal{T}-2)\frac{1-w^{(s)}(1)}{1-w^{(s)}(\mathcal{T}-1)}Z_{\mathcal{T}} -Z_1 \\
        &\le w^{(s)}(1)\frac{1-w^{(s)}(\mathcal{T}-2)}{1-w^{(s)}(\mathcal{T}-1)}R_1 + \left(1-w^{(s)}(\mathcal{T}-2)\right)\left(1-w^{(s)}(1)\right)\frac{\alpha_{\mathcal{T}}}{1-w^{(s)}(\mathcal{T}-1)}R_1+w^{(s)}(\mathcal{T}-2)\frac{1-w^{(s)}(1)}{1-w^{(s)}(\mathcal{T}-1)}R_2-R_1\\
        &=\frac{R_1}{1-w^{(s)}(\mathcal{T}-1)}\left((1-w^{(s)}(\mathcal{T}-2))\left(w^{(s)}(1)+\alpha_{\mathcal{T}}(1-w^{(s)}(1))\right)+w^{(s)}(\mathcal{T}-2)(1-w^{(s)}(1))\frac{R_2}{R_1}-(1-w^{(s)}(\mathcal{T}-1))\right)\\
        &=\frac{R_1
        w^{(s)}(\mathcal{T}-2)(1-w^{(s)}(1))}{1-w^{(s)}(\mathcal{T}-1)}\left(\frac{R_2}{R_1} +
        \frac{1-w^{(s)}(\mathcal{T}-2)}{
        w^{(s)}(\mathcal{T}-2)(1-w^{(s)}(1))}\left(1-(1-\alpha_{\mathcal{T}})(1-w^{(s)}(1))\right)-\frac{1-w^{(s)}(\mathcal{T}-1)}{
        w^{(s)}(\mathcal{T}-2)(1-w^{(s)}(1))}\right)\\
        &=\frac{R_1
        w^{(s)}(\mathcal{T}-2)(1-w^{(s)}(1))}{1-w^{(s)}(\mathcal{T}-1)}\left(\frac{R_2}{R_1} -
        \frac{(1-w^{(s)}(\mathcal{T}-2))(1-\alpha_{\mathcal{T}})}{
        w^{(s)}(\mathcal{T}-2)}-\frac{w^{(s)}(\mathcal{T}-2)-w^{(s)}(\mathcal{T}-1)}{
        w^{(s)}(\mathcal{T}-2)(1-w^{(s)}(1))}\right)<0\\
    \end{align*}
    where we used \eqref{eq:cond_rec} to derive the last inequality. Consequently, we have $X_2<X_1$, which confirms that $X_1$ is a record of $\{X_t\}_{t\in\interv{1}{\mathcal{T}}}$.

    Similarly, for $t=\mathcal{T}$,
    \begin{equation*}
   X_{\mathcal{T}-1}
   =w^{(s)}(\mathcal{T}-2)\frac{1-w^{(s)}(1)}{1-w^{(s)}(\mathcal{T}-1)}R_1 + \left(1-w^{(s)}(1)\right)\left(1-w^{(s)}(\mathcal{T}-2)\right)\widetilde{Z}_t+w^{(s)}(1)\frac{1-w^{(s)}(\mathcal{T}-2)}{1-w^{(s)}(\mathcal{T}-1)}Z_{\mathcal{T}} 
\end{equation*}
    In particular since $Z_{\mathcal{T}}$ is a record of  $\{Z_t\}_{t\in\interv{1}{\mathcal{T}}}$, we have for any $t\le \mathcal{T}$, $Z_t\le Z_{\mathcal{T}}$ and therefore $\widetilde{Z}_t\le Z_{\mathcal{T}}=R_2$ (since it is defined as a convex combination of $\{Z_t\}_{t\in\interv{1}{\mathcal{T}}}$). Besides, we also have by assumption $R_1 < R_2$. Hence,
    \begin{align*}
   X_{\mathcal{T}-1}
   &<w^{(s)}(\mathcal{T}-2)\frac{1-w^{(s)}(1)}{1-w^{(s)}(\mathcal{T}-1)}R_2 + \left(1-w^{(s)}(1)\right)\left(1-w^{(s)}(\mathcal{T}-2)\right)R_2+w^{(s)}(1)\frac{1-w^{(s)}(\mathcal{T}-2)}{1-w^{(s)}(\mathcal{T}-1)}R_2\\
   &=\frac{R_2}{1-w^{(s)}(\mathcal{T}-1)}\left(w^{(s)}(\mathcal{T}-2)(1-w^{(s)}(1)) + \left(1-w^{(s)}(1)\right)\left(1-w^{(s)}(\mathcal{T}-2)\right)(1-w^{(s)}(\mathcal{T}-1))+w^{(s)}(1)(1-w^{(s)}(\mathcal{T}-2))\right)
   \\
   &=\frac{R_2}{1-w^{(s)}(\mathcal{T}-1)}\left(1-w^{(s)}(1)w^{(s)}(\mathcal{T}-2)-\left(1-w^{(s)}(1)\right)\left(1-w^{(s)}(\mathcal{T}-2)\right)w^{(s)}(\mathcal{T}-1)\right)
   \\
   &=\frac{R_2}{1-w^{(s)}(\mathcal{T}-1)}\left(1-w^{(s)}(\mathcal{T}-1)-w^{(s)}(1)w^{(s)}(\mathcal{T}-2)+w^{(s)}(\mathcal{T}-1)\left(1-\left(1-w^{(s)}(1)\right)\left(1-w^{(s)}(\mathcal{T}-2)\right)\right)\right)
      \\
   &=\frac{R_2}{1-w^{(s)}(\mathcal{T}-1)}\left(1-w^{(s)}(\mathcal{T}-1)-w^{(s)}(1)w^{(s)}(\mathcal{T}-2)+w^{(s)}(\mathcal{T}-1)\left(1-\left(1-w^{(s)}(1)\right)\left(1-w^{(s)}(\mathcal{T}-2)\right)\right)\right)
         \\
   &=R_2\left(1-\frac{w^{(s)}(1)w^{(s)}(\mathcal{T}-2)+w^{(s)}(\mathcal{T}-1)\left(w^{(s)}(1)w^{(s)}(\mathcal{T}-2)-w^{(s)}(1)-w^{(s)}(\mathcal{T}-2)\right)}{1-w^{(s)}(\mathcal{T}-1)}\right)=R_2\left(1-\frac{\beta}{1-w^{(s)}(\mathcal{T}-1)}\right)
\end{align*}
where
\begin{align*}
    \beta
    &=w^{(s)}(1)w^{(s)}(\mathcal{T}-2)+w^{(s)}(\mathcal{T}-1)\left(w^{(s)}(1)w^{(s)}(\mathcal{T}-2)-w^{(s)}(1)-w^{(s)}(\mathcal{T}-2)\right)\\
    &= w^{(s)}(1)w^{(s)}(\mathcal{T}-2)w^{(s)}(\mathcal{T}-1)\left[\frac{1}{w^{(s)}(\mathcal{T}-1)}+1-\frac{1}{w^{(s)}(1)}-\frac{1}{w^{(s)}(\mathcal{T}-2)})\right]\\
    &= w^{(s)}(1)w^{(s)}(\mathcal{T}-2)w^{(s)}(\mathcal{T}-1)\left[{v^{(s)}(\mathcal{T}-1)}-{v^{(s)}(\mathcal{T}-2)}-\left(v^{(s)}(1)-{v^{(s)}(0)}\right)\right]
\end{align*}
where we define for $k\ge 0$, $v^{(s)}(k)=1/w^{(s)}(k)=\exp(k^2/2s^2)$. Note in particular that $k\mapsto v^{(s)}(k)$ is a convex function and therefore satisfies, for any $0\le k_1<k_2<k_3$ the inequality
\begin{equation*}
\frac{v^{(s)}(k_2)-v^{(s)}(k_1)}{k_2-k_1}
\le \frac{v^{(s)}(k_3)-v^{(s)}(k_1)}{k_3-k_1}
\le \frac{v^{(s)}(k_3)-v^{(s)}(k_2)}{k_3-k_2}
\end{equation*}
Hence, applying  these inequalities to  $0<1<\mathcal{T}-2$ and $1<\mathcal{T}-2<\mathcal{T}-1$ we obtain
\begin{equation*}
\frac{v^{(s)}(1)-v^{(s)}(0)}{1-0}
\le \frac{v^{(s)}(\mathcal{T}-2)-v^{(s)}(1)}{\mathcal{T}-2-1}
\le \frac{v^{(s)}(\mathcal{T}-1)-v^{(s)}(\mathcal{T}-2)}{\mathcal{T}-1-(\mathcal{T}-2)}
\end{equation*}
which gives $v^{(s)}(1)-v^{(s)}(0)\le v^{(s)}(\mathcal{T}-1)-v^{(s)}(\mathcal{T}-2)$. This in turn allows us to conclude that $\beta\ge 0$ and therefore that $X_{\mathcal{T}-1}<R_2$ : $X_{\mathcal{T}}$ is therefore a record.
\end{proof}

\section{Properties of the times series generated by Algorithm 2}\label{appen:res_alg2}

\begin{proposition}
    Let  $\{X_t\}_{t\in\interv{1}{\mathcal{T}}}$ be a random trajectory generated using Algorithm~\ref{alg2} with a record set $\mathcal{M}=\lbrace (R_n,L_n)\rbrace_{n=1,\dots,N}$ (with $L_1=1$, $L_N=\mathcal{T}$), a standard-deviation of Gaussian increments $\sigma>0$, and a smoothing parameter $s\ge 0$. 

    Then, the endpoints of the time series are $X_1=R_1$ and $X_{\mathcal{T}}=R_N$, and we have $\mathcal{M}\subset \mathcal{R}(\{X_t\}_{t\in\interv{1}{\mathcal{T}}})$.
\end{proposition}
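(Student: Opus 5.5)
The proof reduces everything to the per-segment guarantees already established for Algorithm~\ref{alg1}. For each $n\in\interv{1}{N-1}$, the segment $\{S^n_t\}_{t\in\interv{1}{l_n}}$ with $l_n=L_{n+1}-L_n+1$ is produced by Algorithm~\ref{alg1} with record set $\lbrace (R_n,1),(R_{n+1},l_n)\rbrace$. Combining Proposition~\ref{prop:vervaat} (when the Vervaat transform is applied) or the acceptance condition $T_{max}=l_n$, $Y_2<Y_1$ (otherwise) with the smoothing proposition of Appendix~\ref{appen:smooth}, the segment satisfies three properties (assuming~\eqref{eq:cond_rec} holds for each pair, as Algorithm~\ref{alg1} requires):
\begin{itemize}
\item[(a)] $S^n_1=R_n$ and $S^n_{l_n}=R_{n+1}$;
\item[(b)] $S^n_2<S^n_1$;
\item[(c)] $S^n_t\le S^n_{l_n}=R_{n+1}$ for all $t\in\interv{1}{l_n}$.
\end{itemize}

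\textbf{Global maximum property (c).} The smoothing proposition only states that the endpoints are records, i.e.\ local comparisons. I therefore need a short additional argument that the final point remains a global maximum after smoothing. Write $X_t$ in~\eqref{eq:w_s} as a weighted combination of $Z_1$, $\widehat Z_t$ and $Z_{\cT}$. Each of these is at most $R_2$: $Z$ is bounded by $Z_{\cT}$ by Proposition~\ref{prop:vervaat}(iii), $\widetilde Z_t$ is a convex combination of the $Z_j$, and $\widehat Z_2\le\widetilde Z_2$. It then remains to show the weights sum to at most $1$. The proof of the smoothing proposition carries out exactly this computation at $t=\cT-1$ via the convexity of $v^{(s)}$, and I would rerun it for general $t$, using convexity of $v^{(s)}$ on the points $0<t-1<\cT-t$-type triples. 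This is the main obstacle: the weights are not obviously a sub-convex combination. If the general-$t$ inequality fails, I would fall back to including (c) as part of what the smoothing step guarantees, or treat the case $s=0$ directly.

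\textbf{Well-definedness and endpoints.} By (a), $S^n_{l_n}=R_{n+1}=S^{n+1}_1$, so the definition $X_t=S^n_{t-L_n+1}$ is consistent at the shared times $t=L_{n+1}$. Moreover $X_1=S^1_1=R_1$ and $X_{\cT}=S^{N-1}_{l_{N-1}}=R_N$, which gives the endpoint claim.

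\textbf{Record property.} Fix $n$ and verify that $X_{L_n}=R_n$ is a record with $S_{L_n}=\{0,\dots,L_n-1,L_n+1\}$.
\begin{itemize}
\item \emph{Right neighbour.} For $n<N$, $X_{L_n+1}=S^n_2<S^n_1=R_n$ by (b). For $n=N$ there is no right neighbour, since $L_N=\cT$.
\item \emph{Past values.} Take $t<L_n$. Then $t\in\interv{L_m}{L_{m+1}}$ for some $m\le n-1$, and by (c) applied to segment $m$, $X_t\le R_{m+1}\le R_n$, using that the record values increase ($R_{m+1}\le R_n$).
\end{itemize}
The inequality here is only non-strict, whereas the definition~\eqref{def: records} asks for strict inequality. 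To handle this:
\begin{itemize}
\item If $t=L_{m+1}$ with $m+1<n$, then $X_t=R_{m+1}<R_n$ strictly.
\item Inside segment $m$, strictness away from the endpoint comes from the continuous (Gaussian) nature of the construction: almost surely there are no ties with the endpoint, since $T_{max}$ in~\eqref{eq:Tmax} is the unique argmax almost surely and smoothing preserves strict inequality when not all weights sit on $Z_{\cT}$.
\end{itemize}
This yields $X_{L_n}>X_j$ for all $j\in S_{L_n}$ almost surely, hence $(R_n,L_n)\in\mathcal{R}(\{X_t\})$ for every $n$, i.e.\ $\mathcal{M}\subset\mathcal{R}(\{X_t\}_{t\in\interv{1}{\cT}})$.
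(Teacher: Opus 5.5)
Your route is the paper's route: per-segment guarantees from Algorithm~\ref{alg1}, agreement of consecutive segments at the shared times $L_{n+1}$ (the paper instead checks that the concatenation has length $\cT$), and the chain $R_{n+1}\ge\max(X_{L_n},\dots,X_{L_{n+1}-1})\ge R_n\ge\cdots$ through increasing record values. You are right that the paper's proof takes property (c) after smoothing for granted. It cites ``the definition of the smoothing'', but Appendix~\ref{appen:smooth} only proves the two local comparisons at the endpoints. The paper also never addresses the strict inequality required by the record definition.

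Your plan for (c) works, and the ordering of the triple does not matter. Write $p=w^{(s)}(t-1)$, $q=w^{(s)}(\cT-t)$, $W=w^{(s)}(\cT-1)$, and let $\lambda_t$ be the sum of the three weights in \eqref{eq:w_s}. A direct expansion gives
\begin{equation*}
(1-W)(1-\lambda_t)=pq-W(p+q-pq).
\end{equation*}
So $\lambda_t\le 1$ is equivalent to $v^{(s)}(t-1)-v^{(s)}(0)\le v^{(s)}(\cT-1)-v^{(s)}(\cT-t)$. This holds because increments of the convex function $v^{(s)}$ over intervals of fixed length $t-1$ do not decrease as the interval slides from $[0,t-1]$ to $[\cT-t,\cT-1]$. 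At $t=\cT-1$ this is exactly the paper's $\beta\ge 0$, and the inequality is strict for $2\le t\le \cT-1$ when $s>0$.

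The genuine gap is a missing sign hypothesis. Nonnegative weights summing to $\lambda_t\le 1$, together with $Z_1,\widehat Z_t,Z_{\cT}\le R_{n+1}$, only give $X_t\le\lambda_t R_{n+1}$. That is $\le R_{n+1}$ only when $R_{n+1}\ge 0$. The smoothing is not translation-equivariant (it shrinks towards $0$), and for negative records the claim really fails. As $s\to\infty$, $X_{\cT-1}\to\bigl(R_1+(\cT-2)^2R_2\bigr)/(\cT-1)^2$, which exceeds $R_2$ whenever $R_2<0$ and $R_1>(2\cT-3)R_2$. For example, $R_1=-10$, $R_2=-1$ with $\cT$ large still satisfy \eqref{eq:cond_rec}. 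The paper uses positivity silently: the ratio condition \eqref{eq:cond_rec} and the final steps of the Appendix~\ref{appen:smooth} proof need $R_1>0$, hence $R_2>0$. You should state $R_1>0$ as a standing assumption.

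With that assumption, strictness is mostly deterministic rather than almost sure:
\begin{itemize}
\item For $s>0$, $\lambda_t<1$ gives $X_t<R_{n+1}$ for interior $t$.
\item Before smoothing, $Z_t<Z_{\cT}$ for $t<\cT$ holds surely, because $T_{max}$ is the \emph{smallest} argmax and $Z_1<Z_{\cT}$.
\item The probability-zero tie argument is only needed for strictness of (b) when $s=0$. There $V(Y)_2=Y_1+(Y_{T_{max}+1}-Y_{T_{max}})$ can equal $Y_1$, but only on a null event.
\end{itemize}
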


\begin{proof}

Let us first check that the concatenation of the time series $\{S^n_1,...,S^n_{l_n-1}\}$, for $n \in \{1,...,N-2\}$ and of $\{S^{N-1}_1,...,S^{N-1}_{l_N}\}$ made in Algorithm~\ref{alg2} results indeed in a time series of length $\mathcal{T}$. Hence, we denote by $H$ the length of the concatenation. Then,
\begin{align*}
    H&=\biggl(\sum_{n=1}^{N-1}length(X^n)-1\biggl)+1\\
    &=\biggl(\sum_{n=1}^{N-1}l_n\biggl)-(N-1)+1\\
    &=\biggl(\sum_{n=1}^{N-1}L_{n+1}-L_n+1\biggl)-N+2\\
    &=\biggl(\sum_{n=1}^{N-1}L_{n+1} -\sum_{n=1}^{N-1}L_{n}+\sum_{n=1}^{N-1}1\biggl)-N+2\\
    &=\biggl(\sum_{n=2}^{N}L_{n}\biggl)-\biggl(\sum_{n=1}^{N-1}L_{n}\biggl)+N-1-N+2\\
    &=L_N-L_1+1\\
    &=\mathcal{T}
    \end{align*}
by definition of $L_1$ and $L_N$. Hence, as constructed by Algorithm~\ref{alg2}, $(X_t)_t$ is indeed a time series of length $\mathcal{T}$. Besides, by definition of the first and last time series in the concatenation, we obtain directly that $X_1=R_1$ and $X_{\mathcal{T}}=R_N$, and following the properties of the Vervaat transform (cf. Proposition~\ref{prop:vervaat}) and the definition of the smoothing~\eqref{eq:w_s}, we have $X_1 \ge X_2$ and $X_{\mathcal{T}}\ge X_{\mathcal{T}-1}$ which means that the endpoints of $\{X_t\}_{t\in\interv{1}{\mathcal{T}}}$ are local maxima, and therefore records of the time series.

We now prove that $\mathcal{M}\subset \mathcal{R}(\{X_t\}_{t\in\interv{1}{\mathcal{T}}})$. Given the result above, we only need to check that for $1<n<N$, the record pair $(R_n,L_n)\in\mathcal{M}$, satisfies $X_{L_n}=R_n$ and $ X_{L_n}\geq \max(X_1,...,X_{L_n-1},X_{L_n+1}) $. Once again, following the properties of the Vervaat transform (cf. Proposition~\ref{prop:vervaat}) and the definition of the smoothing~\eqref{eq:w_s}, we have that $X_{L_n}=R_n\ge X_{L_n+1}$, $X_{L_n+l_n}=X_{L_{n+1}}=R_{n+1} \ge X_{L_{n+1}-1}$, $ X_{L_{n+1}}\geq \max(X_{L_n},...,X_{L_{n+1}-1})$. We can therefore conclude by induction that, 
$$ X_{L_{n+1}}=R_{n+1}\geq  \max(X_{L_n},...,X_{L_{n+1}-1})\ge X_{L_n} =R_n \ge \max(X_{L_{n-1}},...,X_{L_{n}-1}) \ge \dots \ge \max(X_{L_{1}},...,X_{L_{2}-1})$$
Hence, for any $1<n<N$, we have $X_{L_{n}}=R_{n}\geq  \max(X_{L_1},...,X_{L_{n}-1})=\max(X_{1},...,X_{L_{n}-1})$ and $X_{L_n} \ge X_{L_n+1}$, meaning that $(R_n,L_n)$ is indeed a record of $\{X_t\}_{t\in\interv{1}{\mathcal{T}}}$. This concludes the proof.

\end{proof}

\section{Metrics used}\label{appen:metrics}
$Z=(z_1,z_2,...z_\mathcal{T})$ a simulated trajectory and $X=(x_1,x_2,...,x_\mathcal{T})$ an observed one with possible missing values. We denote by $I$ the indexes where values of $X$ are not missing.\\

i) MSE
\begin{equation}\label{mse}
    \frac{1}{\mathcal{T}}\sum_{i\in I}(z_i-x_i)^2
\end{equation}

ii)MAPE
\begin{equation}\label{mape}
    \frac{1}{\mathcal{T}}\sum_{i\in I}\left\lvert\frac{z_i-x_i}{x_i}\right\lvert
\end{equation}

iii) Variance
\begin{equation}\label{var}
    \frac{1}{\mathcal{T}}\sum_{i=1}^{\mathcal{T}}(z_i-\bar{z})^2
\end{equation}

iv) Area under the curve
\begin{equation}\label{auc}
    \sum_{i=1}^{\mathcal{T}-1}\frac{\Delta t}{2}+(z_{i+1}+{z_i})
\end{equation}

\begin{center}
    \includegraphics[width=0.7\linewidth]{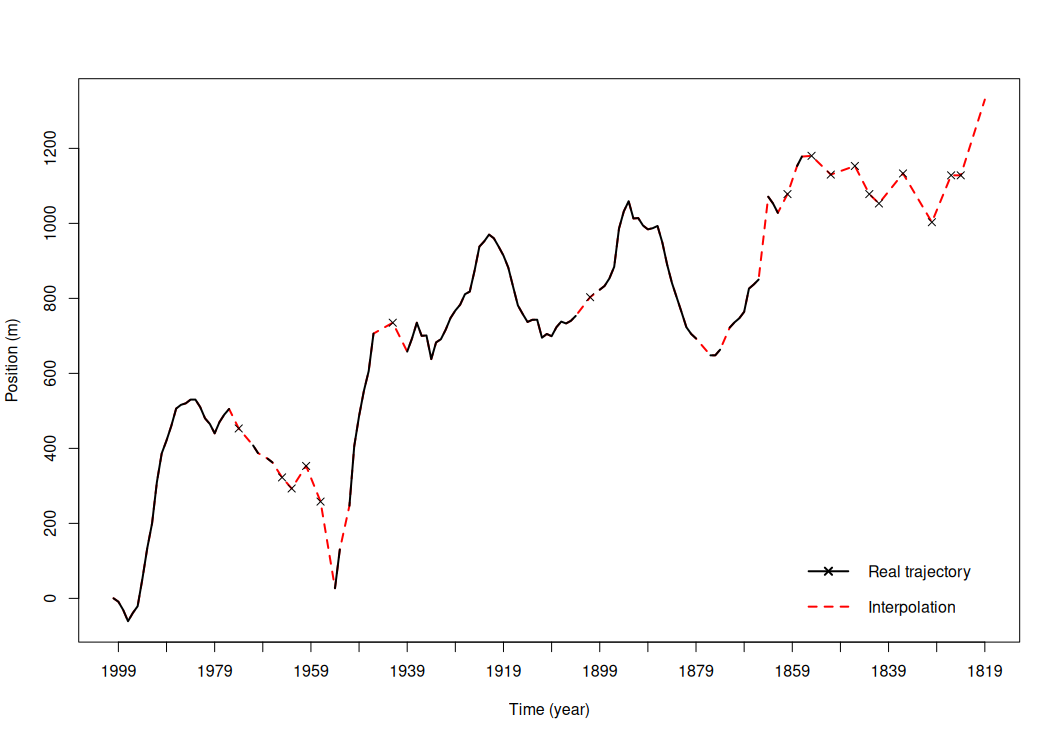}
    \captionof{figure}{Result of the linear interpolation used to compute metrics of Figure\ref{fig:4metrics}. Computed with the \textit{na.approx} function of the R package \textit{zoo}. }
    \label{fig:interpolation}
\end{center}

\bibliography{references2}

\end{document}